\documentclass[11pt,letterpaper]{article}

\usepackage{typearea}
\paperwidth 8.5in \paperheight 11in
\typearea{15}

\usepackage{theorem,latexsym,graphicx}
\usepackage{amsmath,amssymb,enumerate}
\usepackage{xspace}
\usepackage{bm}
\usepackage{ifpdf}
\usepackage{shadow,shadethm,color}
\usepackage[compact]{titlesec}
\usepackage{algorithm}
\usepackage[noend]{algorithmic}
\usepackage{subfigure}
\usepackage{verbatim}
\usepackage{paralist}

\allowdisplaybreaks

\definecolor{Darkblue}{rgb}{0,0,0.4}
\definecolor{Brown}{cmyk}{0,0.81,1.,0.60}
\definecolor{Purple}{cmyk}{0.45,0.86,0,0}
\newcommand{\mydriver}{hypertex}
\ifpdf
 \renewcommand{\mydriver}{pdftex}
\fi
\usepackage[breaklinks,\mydriver]{hyperref}
\hypersetup{colorlinks=true,
            citebordercolor={.6 .6 .6},linkbordercolor={.6 .6 .6},%
citecolor=Darkblue,urlcolor=black,linkcolor=Darkblue,pagecolor=black}
\newcommand{\lref}[2][]{\hyperref[#2]{#1~\ref*{#2}}}

\makeatletter
 \setlength{\parindent}{0pt}
 \addtolength{\partopsep}{-2mm}
 \setlength{\parskip}{5pt plus 1pt}
 \addtolength{\theorempreskipamount}{-1mm}
 \addtolength{\theorempostskipamount}{-1mm}
 \addtolength{\abovedisplayskip}{-3mm}
 \addtolength{\textheight}{45pt}
 \addtolength{\footskip}{-10pt}
\makeatother

\newtheorem{theorem}{Theorem}[section]

\newtheorem{lemma}[theorem]{Lemma}
\newshadetheorem{lemmashaded}[theorem]{Lemma}

\numberwithin{algorithm}{section}

\newenvironment{proof}{

\noindent{\bf Proof:}}
{\hfill$\blacksquare$

}

\newcommand{\junk}[1]{}
\newcommand{\ignore}[1]{}



\newcommand{\onGAP}{{\textsf{OnGAP}}\xspace}

\newcommand{\last}{{\psi}}

\newcounter{note}[section]


\newcommand{\qedsymb}{\hfill{\rule{2mm}{2mm}}}
\renewenvironment{proof}{\begin{trivlist} \item[\hspace{\labelsep}{\bf
\noindent Proof.\/}] }{\qedsymb\end{trivlist}}%

\newcommand{\initOneLiners}{%
    \setlength{\itemsep}{0pt}
    \setlength{\parsep }{0pt}
    \setlength{\topsep }{0pt}
}

\newcommand{\squishlist}{
 \begin{list}{$\bullet$}
  { \setlength{\itemsep}{0pt}
     \setlength{\parsep}{3pt}
     \setlength{\topsep}{3pt}
     \setlength{\partopsep}{0pt}
     \setlength{\leftmargin}{1.5em}
     \setlength{\labelwidth}{1em}
     \setlength{\labelsep}{0.5em} } }

\newcommand{\squishend}{
  \end{list}  }

\newcommand{\E}{\mathbb{E}}


\begin{document}
\title{Online Primal-Dual For Non-linear Optimization \\with Applications
  to Speed Scaling}

\author{
Anupam Gupta\thanks{Computer Science Department, Carnegie Mellon
    University, Pittsburgh, PA 15213, USA. Supported in part by
    NSF awards CCF-0964474 and CCF-1016799. R.K.\ supported in part by
    an IBM Graduate Fellowship.}
\and Ravishankar Krishnaswamy\footnotemark[1]
\and Kirk Pruhs\thanks{
Computer Science Department,
University of Pittsburgh, Pittsburgh, PA 15260, USA.
{\tt kirk@cs.pitt.edu}.
Supported in part by
NSF grant CCF-0830558 and
an IBM Faculty Award.}
}
\date{}
\maketitle
\thispagestyle{empty}

\begin{abstract}
  We reinterpret some online greedy algorithms for a class of nonlinear
  ``load-balancing'' problems as solving a mathematical program
  online. For example, we consider the problem of assigning jobs to
  (unrelated) machines to minimize the sum of the $\alpha^{th}$-powers
  of the loads plus assignment costs (the \emph{online Generalized
    Assignment Problem}); or choosing paths to connect terminal pairs to
  minimize the $\alpha^{th}$-powers of the edge loads (i.e.,
  \emph{online routing with speed-scalable routers}).  We give analyses
  of these online algorithms using the dual of the primal program as a
  lower bound for the optimal algorithm, much in the spirit of online
  primal-dual results for linear problems.

  We then observe that a wide class of uni-processor speed scaling
  problems (with essentially arbitrary scheduling objectives) can be
  viewed as such load balancing problems with linear assignment
  costs. This connection gives new algorithms for problems that had
  resisted solutions using the dominant potential function approaches
  used in the speed scaling literature, as well as alternate, cleaner
  proofs for other known results.
\end{abstract}

\vfill
\thispagestyle{empty}
\setcounter{page}{0}
\pagebreak

\section{Introduction}
\label{sec:introduction}

In this paper, we consider two online problems related to load balancing.
We call the first problem \emph{Online
  Generalized Assignment Problem (\onGAP)}:

{\bf Definition of \onGAP:}  Jobs arrive one by one in an online manner, and the algorithm must fractionally assign these jobs to one of $m$
machines. When a job $j$ arrives, the online algorithm learns
$\ell_{je}$, the amount by which the load of machine $e$ would increase for \emph{each unit} of work of job $j$ that is assigned to machine $e$,
and $c_{je}$, the assignment cost incurred for each unit of
work of job $j$ that is assigned to machine $e$. The goal is to minimize
the sum of the $\alpha^{th}$ powers of the machine loads, plus the total
assignment cost. 

The version of \onGAP without assignment costs was studied by
\cite{AAFPW,AAGKKV95}. Our original motivation for studying \onGAP is
that it models a well-studied class of \emph{speed scaling problems with
  sum cost scheduling objectives}. In these problems, jobs arrive over
time and must be scheduled on a speed scalable processor---i.e., a
processor that can run at any non-negative speed, and uses power
$s^\alpha$ when run at speed $s$. The objective is the sum of the energy
used by the processor plus a fractional scheduling objective that is the
sum over jobs of the ``scheduling cost'' of the individual jobs.  These
speed scaling problems are a special case of \onGAP where the machines
model the times that jobs can be scheduled, the assignment cost $c_{je}$
models the scheduling cost for scheduling a unit of job $j$ at time $e$.
For example, one such scheduling objective is the sum of the fractional
flow/response times squared. For this objective, $c_{je}$ is $(e -
r_j)^2$ for all times $e$ that are at least the release time $r_j$ of
job $j$, and infinite otherwise. Another example is the problem of
minimizing energy usage subject to deadline constraints, introduced by
\cite{YDS} and considered in followup papers \cite{BKP, BansalBCP11,
  BansalCPK09}. This problem can be viewed as a special case of \onGAP,
where each job $j$ has an associated deadline $d_j$, and $c_{je}$ is
zero if $e \in [r_j, d_j]$ and infinite otherwise.

The second problem that we consider is a variation/generalization of \onGAP involving
\emph{online routing with speed scalable routers} to minimize energy,
which was previously considered in \cite{AAFPW, AndrewsAZ11}. 

{\bf Definition of Online Routing with Speed Scalable Routers Problem:} 
A sequence of requests arrive one by one over time.
Each request $j$ has an associated source-sink pair $(s_j, t_j)$
in a network of speed scalable routers, and the online algorithm must
route flow between the source-sink pair, with an objective of minimizing
the total energy used by the network, where the energy incurred by an edge $e$ is the $\alpha^{th}$ power of the load flowing through it.

For load balancing and online routing, it was known that natural online
greedy algorithms, which assign jobs to the machine(s) that minimize the
increase in cost, can be shown to be $O_\alpha(1)$-competitive via an
exchange argument, and directly bounding the cost compared to the
optimal cost~\cite{AAFPW,AAGKKV95}.  (In fact, basically the same
argument shows that the online greedy algorithm is
$O_\alpha(1)$-competitive for \emph{integer} assignments.)  Once we
observe that speed scaling problems with sum scheduling objectives can
be reduced to \onGAP, it is not too difficult to see that the analysis
technique in~\cite{AAGKKV95} can be used to show that natural greedy
speed scaling algorithms are $O_p(1)$-competitive.


{\bf Our Contribution.} In this paper, we first interpret these online
problems as solving a mathematical program online, where the constraints
arrive one-by-one, and in response to the arrival of a new constraint,
the online algorithm has to raise some of the primal variables so that
the new constraint will be satisfied. The online algorithms that we
consider raise the primal variables greedily. Our competitive analysis
will use the dual function of the primal program as a lower bound for
optimal. For analysis purposes, we assign a value to the dual variable
corresponding to a constraint after the online algorithm has satisfied
that constraint. Our goal is to set the dual variables so that the
resulting dual solution is closely related to the online solution. (In
our analyses, the settings of the dual variables naturally correspond to
(some approximation for) the increase in the objective function.) We
first show how to obtain fractional solutions to these problems, and
subsequently show how similar ideas  can be used for integer
assignments.

Our analyses are very much in the spirit of the online primal dual
technique for linear programs~\cite{BN-mono}.  The main difference is
that in the nonlinear setting, the dual is more complicated than in the
linear setting (where the dual is just another linear program). Indeed,
in the nonlinear setting, one can not disentangle the objective and the
constraints, since the dual itself contains a version of the primal
objective, and hence copies of the primal variables, within
it. Consequently, the arguments for the dual function in the nonlinear
setting have a somewhat different feel to them than in the linear
setting. In particular, we need to set the dual variables $\lambda$, and
then find minimizing settings for the copies of the primal variables to
get a good lower bound. For the load-balancing and speed scaling
problems, this proceeds relatively naturally. But for the routing
problem the dual minimization problem is itself non-trivial: in this
case we first show how to write a ``relaxed/decoupled'' dual, which is
potentially weaker than the original dual, but easier to argue about,
and then set the variables of this relaxed dual to achieve a good lower
bound. We hope this analytical technique will be useful for other
problems. 


We then show how a wide class of uni-processor speed-scaling problems
(with essentially arbitrary scheduling objectives) can be viewed as load
balancing problems with linear assignment costs. This connection gives
new algorithms for speed-scaling problems that had resisted solutions
using the dominant potential function approaches used in the speed
scaling literature, as well as alternate, cleaner analyses for some
known results. For speed scaling problems, our analysis using duality is
often cleaner (compare for example, the analysis of OA in \cite{BKP} to
the analysis given here) and more widely applicable (for example, to
nonlinear scheduling objectives) than the potential function-based
analyses.  Furthermore, we believe that much like the online primal-dual
approach for linear problems, the techniques presented here have
potential for wide applicability in the design and analysis of online
algorithms for other non-linear optimization problems.

\paragraph{Roadmap:} In \lref[Section]{sec:related-work} we discuss
related work.  In \lref[Section]{sec:lpnorm} we consider \onGAP. In
\lref[Section]{sec:speed-scaling}, we make some comments about the
application of these results to speed scaling problems.  In
\lref[Section]{sec:routing} we consider the online routing problem.  In
\lref[Section]{sec:rounding} we show how to alter the water-filling
algorithm to obtain \emph{integer assignments} with a similar
competitive ratio, as well a simple randomized rounding with a slightly
worse performance.

\subsection{Related Work}
\label{sec:related-work}

An $O(\alpha)$-competitive online greedy algorithm for the unrelated
machines load-balancing problem in the $L_\alpha$-norm was given by
\cite{AAFPW,AAGKKV95}; Caragiannis~\cite{Carag08} gave better analyses
and improvements using randomization. An offline $O(1)$-approximation
for this problem was given by~\cite{AE05} and~\cite{KMPS05}, via
solving the convex program and then rounding the solution in a
correlated fashion.  For the routing problem, the
$O(\alpha)^\alpha$-algorithm can be inferred from the ideas
of~\cite{AAFPW,AAGKKV95}.  Followup work in a setting of a network
consisting of routers with static and dynamic power components can be
found in~\cite{AndrewsAZ10,AndrewsAZ11}.

There are two main lines of speed scaling research that fit within the
framework that we consider here. This first is the problem of minimizing
energy used with deadline feasibility constraints. \cite{YDS} proposed two online algorithms
\emph{OA} and \emph{AVR}, and showed that \emph{AVR} is
$O_\alpha(1)$-competitive by reasoning directly about the optimal
schedule.  \cite{BKP}~introduced the use of potential functions for
analyzing online scheduling problems, and showed that \emph{OA} and another algorithm \emph{BKP} are
$O_\alpha(1)$-competitive. \cite{BansalBCP11} gave a potential function
analysis to show that \emph{AVR} is $O_\alpha(1)$-competitive.
\cite{BansalCPK09} introduced the algorithm \emph{qOA}, and gave a potential function
analysis to show that
it has a better competitive ratio than OA or AVR for smallish $\alpha$.

The second main line of speed scaling research is when the scheduling
objective is total flow, or more generally total weighted flow.
\cite{PUW,AF} gave offline algorithms for unit-weight unit-work jobs.
All of the work on online algorithms consider some variation of the ``natural''
algorithm, which uses the ``right'' job selection algorithm from the literature on scheduling
fixed speed processors, and sets the power of the processor equal to the weight of the
outstanding jobs. This speed scaling policy is ``natural'' in that it balances the energy and scheduling costs.
By reasoning directly about the energy optimal schedule, \cite{AF}
showed that a batched version of the natural algorithm is
$O_\alpha(1)$-competitive for unit-work unit-weight jobs.  Using a
potential function analysis, \cite{BPS} showed that a variation of the
natural algorithm is $O_\alpha(1)$-competitive for arbitrary-weight
arbitrary-work jobs.  For the objective of total flow plus energy,
the bound on the competitive ratio was improved in
\cite{LLTW08} by use of potential function tailored to integer flow instead of
fractional flow. Using a potential function analysis, \cite{BansalCP09}
showed a variation on the natural algorithm is $O(1)$-competitive
for total flow plus energy for an arbitrary power function,
and a variation on the natural algorithm is scalable, for
fractional weighted flow plus energy for an arbitrary power
function. \cite{AndrewLW10} improved the bound on the competitive ratio
for total flow plus energy. Nonclairvoyant algorithms are analyzed in \cite{ChanELLMP09,ChanLL10}.
A relatively recent survey of the algorithmic power management literature  in general, and 
the speed scaling literature in particular, can be found in \cite{Albers}.

An extensive survey/tutorial on the online primal dual technique for linear problems,
as well the history of the development of this technique, can be found in \cite{BN-mono}.

\section{The Online Generalized Assignment
Problem}
\label{sec:lpnorm}

In this section we consider the problem of Online Generalized Assignment
Problem (\onGAP).  If $x_{je}$ denotes the extent to which job $j$ is
assigned on machine~$e$, then this problem can be expressed by the
following mathematical program:
\begin{align*}
\min\quad& \sum_e \bigg(\sum_{j} \ell_{je} x_{je}\bigg)^\alpha  + \sum_e \sum_j c_{je} x_{je} \\
\textrm{subject to}\quad&\sum_{e} x_{je} \geq 1 \qquad j = 1, \ldots, n
\end{align*}
The dual function of the primal relaxation is then
\begin{align}
g(\lambda) = \min_{x \succeq 0}
 \bigg( \sum_j \lambda_j   +    \sum_e \bigg(\sum_{j} \ell_{je} x_{je}\bigg)^\alpha   +  \sum_{j,e} c_{je} x_{je}          -   \sum_{j,e} \lambda_j\, x_{je} \bigg) \label{3eq:g}
\end{align}
One can think of the dual problem as having the same instance as the primal, but where jobs are allowed
to be assigned to extent zero. In the objective, in addition to the same load cost
$\sum_e \big(\sum_{j} \ell_{je} x_{je}\big)^\alpha$ as in the primal, a fixed cost of $\lambda_j$ is paid for each job
$j$, and a payment of $\lambda_j - c_{je}$ is obtained for each unit of job $j$ assigned.
It is well known that each feasible value of the dual function is a lower bound to the optimal primal solution;
this is  \emph{weak duality}~\cite{Boyd}.

\medskip
\noindent
{\bf Online Greedy Algorithm Description:}
Let $\delta$ be a constant that we will later set to $\frac{1}{\alpha^{\alpha -1}}$.
To schedule job $j$, the load is increased on the machines for which the increase the cost will be the least,
assuming that energy costs are discounted by a factor of $\delta$, until a unit of job $j$
is scheduled. 
More formally, the value of all the primal variables $x_{je}$ for all the
 machines $e$ that minimize
  \begin{equation}
\label{derivative} \delta \cdot {\alpha \cdot \ell_{je}}  \bigg(\sum_{i \leq j} \ell_{ie} x_{ie}\bigg)^{\alpha-1} +  c_{je} \end{equation}
are increased until all the work from job $j$ is scheduled, i.e., $\sum_{e} x_{je} = 1$. Notice that 
${\alpha \cdot \ell_{je}}  \big(\sum_{i \leq j} \ell_{ie} x_{ie}\big)^{\alpha-1}$ is the rate at which the load cost 
is increasing for machine $e$, and $ c_{je}$ is the rate that assignment costs are increasing for machine $e$. In other words, our algorithm fractionally assigns the job to the machines on which the overall objective function increases at the least rate. Furthermore,
observe that if the algorithm begins assigning the job to some machine~$e$, it does not stop raising the primal variable $x_{je}$ until the job is fully assigned\footnote{It may however increase $x_{je}$ and $x_{je'}$ at different rates so as to balance the derivatives where $e$ and $e'$ are both machines which minimize equation~\ref{derivative}}. By this monotonicity property, it is clear that all machines $e$ for which $x_{je} > 0$ have the same value of the above derivative when $j$ is fully assigned.
Now, for the purpose of analysis, we set the value $\widehat{\lambda}_j$ to be the rate of increase of the objective value when we assigned the last infinitesimal portion of job $j$.
More formally, if $e$ is any machine on which job $j$ is run, i.e.,  if $x_{je} > 0$, then
\begin{equation}
  \widehat{\lambda}_j := \delta \cdot {\alpha \cdot \ell_{je}} \bigg(\sum_{i\leq j}
  \ell_{ie} x_{ie}\bigg)^{\alpha-1} +  c_{je}
\end{equation}
Intuitively, $\widehat{\lambda}_j$ is a surrogate for the total increase in objective function value due to our fractional assignment of job $j$ (we assign a total of $1$ unit of job $j$, and $\lambda_j$ is set to be the rate at which objective value increases). 

We now move on to the analysis of our algorithm. To this end, let $\widetilde{x}$ denote the final
value of the $x_{je}$ variables for the online algorithm.

\medskip \noindent {\bf Algorithm Analysis.}
To establish that the online algorithm is $\alpha^\alpha$-competitive, note that
it is sufficient (by weak duality) to show that $g(\widehat{\lambda})$ is at least $\frac{1}{\alpha^\alpha}$ times the cost of the online solution.
Towards this end, let $\widehat x$
be the value of the minimizing $x$ variables in $g(\widehat
\lambda)$, namely
$$\widehat{x} = \arg\min_{x \succeq 0}  \bigg( \sum_j \widehat \lambda_j +  \sum_e \bigg(
      \sum_{j} \ell_{je}x_{je}\bigg)^\alpha - \sum_{j,e} \bigg( \widehat \lambda_j - {c_{je}}\bigg)\, x_{je}  \bigg)$$
      Observe that the values $\widehat x$
  could be very different from the values  $\widetilde{x}$, and indeed
  the next few Lemmas try to characterize these values.
  \lref[Lemma]{3lem:one-job} notes that $\widehat x$ only has one job $\varphi(e)$ on each machine~$e$, and \lref[Lemma]{3lem:phi-job} shows how to determine $\varphi(e)$ and $\widehat{x}_{\varphi(e)e}$.  Then, in \lref[Lemma]{3lem:last-phi}, we show that a feasible choice for the job $\varphi(e)$ is the latest arriving job for which the online algorithm scheduled some bit of work on machine~$e$; Let us denote this latest job by $\last(e)$.

\begin{lemma}
  \label{3lem:one-job}
There is a minimizing solution $\widehat{x}$ such that if $\widehat{x}_{je}  > 0$, then $\widehat{x}_{ie} = 0$ for all $i \ne j$.
\end{lemma}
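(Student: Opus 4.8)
The plan is to exploit the fact that, once the constant term $\sum_j \widehat\lambda_j$ is dropped, the objective defining $\widehat x$ \emph{separates across machines}: both the load-cost term $\sum_e \big(\sum_j \ell_{je} x_{je}\big)^\alpha$ and the linear term $\sum_{j,e}(c_{je}-\widehat\lambda_j)x_{je}$ are sums over $e$ of expressions involving only the variables $\{x_{je}\}_j$ on that one machine. Hence it suffices to show, for each fixed machine $e$ independently, that
$$h_e(x_{\cdot e}) := \Big(\sum_j \ell_{je}\,x_{je}\Big)^\alpha - \sum_j (\widehat\lambda_j - c_{je})\,x_{je}$$
admits a minimizer over $x_{\cdot e}\succeq 0$ supported on at most one job; assembling these per-machine minimizers yields the desired $\widehat x$.

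First I would parametrize each machine by its total load $L := \sum_j \ell_{je} x_{je}$. For a \emph{fixed} value of $L$ the term $L^\alpha$ is constant, so minimizing $h_e$ reduces to \emph{maximizing} the linear functional $\sum_j (\widehat\lambda_j - c_{je})\,x_{je}$ subject to the single equality $\sum_j \ell_{je} x_{je} = L$ together with $x_{\cdot e}\succeq 0$. This is a linear program with one nontrivial constraint, so its optimum is attained at a vertex of the feasible polytope, and each such vertex has at most one nonzero coordinate. Concretely, writing the per-unit-load gain of job $j$ as $\rho_j := (\widehat\lambda_j - c_{je})/\ell_{je}$, the cheapest way to realize load $L$ is to pour it entirely onto a single job $j^\star \in \arg\max_j \rho_j$, taking $x_{j^\star e} = L/\ell_{j^\star e}$ and all other $x_{ie}=0$, which gives $h_e = L^\alpha - \rho_{j^\star} L$. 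Equivalently, one can phrase this as an exchange argument: from any minimizer with two positive jobs $i\ne j$ on $e$, shifting mass at fixed total load from the smaller-$\rho$ job to the larger-$\rho$ one never increases $h_e$, so all of machine $e$'s load may be concentrated on one job.

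It then remains to optimize the scalar map $L \mapsto L^\alpha - \rho_{j^\star}L$ over $L\ge 0$; since $\alpha>1$ this is convex with a unique nonnegative minimizer (at $L=(\rho_{j^\star}/\alpha)^{1/(\alpha-1)}$ when $\rho_{j^\star}>0$, and at $L=0$ when $\rho_{j^\star}\le 0$). In the latter degenerate case the minimizer uses \emph{zero} jobs, which still satisfies the claim, so the one-job support property holds uniformly. The main point requiring care is precisely the fixed-$L$ reduction to a single job — this is where separability and the single-constraint LP/vertex structure do all the work — together with the bookkeeping when several jobs tie for the maximum ratio $\rho_{j^\star}$ (any one may be chosen) and the tacit assumption that $\ell_{je}>0$ for jobs that could be assigned, which is harmless since $g(\widehat\lambda)$ being finite already rules out any zero-load, positive-gain job that would drive $h_e$ to $-\infty$.
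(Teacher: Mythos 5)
Your proposal is correct and takes essentially the same approach as the paper: both arguments rest on the observation that, per machine, fixing the total load $L$ freezes the convex term $L^\alpha$, so the remaining problem is linear and can be pushed to a single-job support --- the paper phrases this as a local exchange between two positive jobs along the fixed-load line, while you phrase it as a vertex of the one-constraint LP, which (as you note yourself) is the same exchange argument. The extra material you include --- the explicit optimization over $L$, the tie-breaking among maximizers of $\rho_j = (\widehat\lambda_j - c_{je})/\ell_{je}$, and the degenerate cases $\rho_{j^\star}\le 0$ and $\ell_{je}=0$ --- is not needed for this lemma (the scalar optimization is the content of the paper's next lemma), but it is sound and in fact handles edge cases the paper's one-line exchange glosses over.
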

\begin{proof}
  Suppose for some machine $e$, there exist distinct jobs $i$ and $k$
  such that $\widehat{x}_{ie} > 0$ and $\widehat{x}_{ke} > 0$. Then by
  the usual argument of either increasing or decreasing these variables
  along the line that keeps their sum constant, we can keep the convex
  term $(\sum_{j} \ell_{je}\widehat{x}_{je})^\alpha$ term fixed and not
  increase the linear term $\sum_{j} ( \widehat \lambda_j - {c_{je}})\,
  \widehat{x}_{je}$. This allows us to either set $\widehat{x}_{ie}$ or
  $\widehat{x}_{ke}$ to zero without increasing the objective.
\end{proof}

\begin{lemma}
  \label{3lem:phi-job}
  Define $\varphi(e) = \arg\max_{j} \frac{\left(\widehat \lambda_j -
      {c_{je}} \right)}{\ell_{je}}$.  Then $\widehat{x}_{\varphi(e)e} =
  \frac{1}{\ell_{\varphi(e)e}} \left( \frac{\widehat
      \lambda_{\varphi(e)} - { c_{\varphi(e)e}}}{\alpha \ell_{ \varphi(e)
        e}}\right)^{1/(\alpha - 1)}$ and $\widehat{x}_{je} = 0$ for $j \neq
  \varphi(e)$. Moreover, the contribution of machine $e$ towards $g(\widehat{\lambda})$ is exactly $(1-\alpha) \left(
    \frac{\widehat \lambda_{\varphi(e)} - {c_{\varphi(e)e}}}{\alpha \ell_{\varphi(e)e}} \right)^{\alpha/(\alpha -
    1)} \,$.
\end{lemma}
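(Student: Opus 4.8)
The plan is to fix an arbitrary machine $e$ and compute the optimal contribution of its variables to the minimization defining $g(\widehat\lambda)$, then optimize over which single job to place on it. By \lref[Lemma]{3lem:one-job} we may restrict attention to minimizers $\widehat x$ in which at most one job is assigned to each machine. Since the term $\sum_j \widehat\lambda_j$ does not depend on $x$, and the rest of the objective decomposes as a sum over machines, it suffices to minimize each machine's contribution independently. For machine $e$ carrying a single job $j$ assigned to extent $x \ge 0$, that contribution is $f_j(x) = (\ell_{je}\, x)^\alpha - (\widehat\lambda_j - c_{je})\, x$.

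Because $\alpha > 1$, the function $f_j$ is convex in $x$, so I would find its minimizer over $x \ge 0$ from the stationarity condition $f_j'(x) = \alpha\, \ell_{je}^\alpha\, x^{\alpha-1} - (\widehat\lambda_j - c_{je}) = 0$. When $\widehat\lambda_j - c_{je} > 0$ this yields the interior optimum $x = \tfrac{1}{\ell_{je}}\big(\tfrac{\widehat\lambda_j - c_{je}}{\alpha\, \ell_{je}}\big)^{1/(\alpha-1)}$, which matches the claimed value of $\widehat{x}_{\varphi(e)e}$ for $j = \varphi(e)$; when $\widehat\lambda_j - c_{je}\le 0$ the minimizer is simply $x = 0$. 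Substituting the optimal $x$ back into $f_j$ is then routine: writing $t_j = \tfrac{\widehat\lambda_j - c_{je}}{\alpha\,\ell_{je}}$, the stationarity relation gives $(\ell_{je}\,x)^\alpha = \tfrac{\widehat\lambda_j - c_{je}}{\alpha}\,x$ and $(\widehat\lambda_j - c_{je})\,x = \alpha\, t_j^{\alpha/(\alpha-1)}$, so that
\[
f_j(x) \;=\; \tfrac{\widehat\lambda_j - c_{je}}{\alpha}\,x - (\widehat\lambda_j - c_{je})\,x \;=\; (1-\alpha)\, t_j^{\,\alpha/(\alpha-1)}.
\]

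It remains to choose the best job. Since $1-\alpha < 0$ and the exponent $\alpha/(\alpha-1) > 0$, the per-machine optimal value $(1-\alpha)\,t_j^{\alpha/(\alpha-1)}$ is a strictly decreasing function of $t_j$ over $t_j \ge 0$; hence it is minimized (most negative) by the job maximizing $t_j$, equivalently maximizing $\tfrac{\widehat\lambda_j - c_{je}}{\ell_{je}}$, which is exactly $\varphi(e)$ by definition. This gives both the stated value of $\widehat x_{\varphi(e)e}$ and the stated contribution $(1-\alpha)\big(\tfrac{\widehat\lambda_{\varphi(e)} - c_{\varphi(e)e}}{\alpha\,\ell_{\varphi(e)e}}\big)^{\alpha/(\alpha-1)}$. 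The step requiring the most care is this final comparison: because the optimal value carries the negative prefactor $(1-\alpha)$, \emph{minimizing} it corresponds to \emph{maximizing} the base $t_j$, which is precisely what makes the $\arg\max$ definition of $\varphi(e)$ correct. I would also flag the degenerate case $t_{\varphi(e)} \le 0$, in which every job has $\widehat\lambda_j - c_{je} \le 0$ and the machine's optimal contribution is $0$ (attained at $\widehat x_{\cdot e} = 0$); the closed form in the statement is the relevant one exactly when the maximizing ratio is positive, which is the regime used subsequently.
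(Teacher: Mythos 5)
Your proof is correct and takes essentially the same route as the paper's: restrict to one job per machine via \lref[Lemma]{3lem:one-job}, set the derivative of the per-machine contribution $(\ell_{je}x)^\alpha - (\widehat\lambda_j - c_{je})x$ to zero, substitute the stationary point back to get $(1-\alpha)\bigl(\tfrac{\widehat\lambda_j - c_{je}}{\alpha\ell_{je}}\bigr)^{\alpha/(\alpha-1)}$, and note that minimizing this (since $\alpha>1$) means maximizing $(\widehat\lambda_j - c_{je})/\ell_{je}$, i.e., choosing $\varphi(e)$. Your explicit handling of the boundary case $\widehat\lambda_j - c_{je} \le 0$ (where the minimizer is $x=0$) is a small point of extra care that the paper leaves implicit, but it does not change the argument.
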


\begin{proof}
  By \lref[Lemma]{3lem:one-job} we know that in $\widehat x$ there is at
  most one job (say $j$, if any) run on machine $e$. Then the
  contribution of this machine to the value of $g(\widehat \lambda)$ is
    \begin{gather}
    \left(\ell_{je} \widehat{x}_{je}\right)^\alpha - (\widehat \lambda_j - {c_{je}})
    \widehat{x}_{je}\label{3eq:9}
  \end{gather}
 Since $\widehat{x}$ is a minimizer for $g(\widehat{\lambda})$, we know that the partial derivative of the above term evaluates to zero. This gives
${\alpha  \ell_{je}} \cdot \left( \ell_{je} \widehat{x}_{je}\right)^{\alpha - 1}  -  \left(\widehat \lambda_j - {c_{je}}\right)
  = 0$,
  or equivalently,
$ {\widehat{x}}_{je} =\frac{1}{\ell_{je}} \left( \frac{\widehat \lambda_j -
      {c_{je}}}{\alpha \ell_{je}} \right)^{1/(\alpha - 1)}
 $.
  Substituting into this value of $\widehat{x}_{je}$ into equation~(\ref{3eq:9}), the contribution of machine $e$ towards the dual
 $g(\widehat \lambda)$ is
  \[  \left( \frac{\widehat \lambda_j -
      {c_{je}}}{\alpha \ell_{je}} \right)^{\alpha/(\alpha - 1)}  -
      \frac{( \widehat \lambda_j -
      {c_{je}})}{\ell_{je}}
      \left( \frac{\widehat \lambda_j -
      {c_{je}}}{\alpha \ell_{je}} \right)^{1/(\alpha - 1)} \; = \; (1-\alpha) \left( \frac{\widehat \lambda_j -
      {c_{je}}}{\alpha \ell_{je}} \right)^{\alpha/(\alpha - 1)}
  \]
  Hence, for each machine $e$, we want to choose that the job $j$ that minimizes this
  expression, which is also the job $j$ that maximizes the expression $ (\widehat \lambda_j -
    {c_{je}})/\ell_{je}$ since $\alpha > 1$. This is precisely the job $\varphi(e)$ and the proof is hence complete.  \end{proof}

\begin{lemma}
  \label{3lem:last-phi}
  For all machines $e$, job $\last(e)$ is feasible choice for $\varphi(e)$.
\end{lemma}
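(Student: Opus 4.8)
The plan is to show that $\last(e)$ actually attains the maximum defining $\varphi(e) = \arg\max_j (\widehat\lambda_j - c_{je})/\ell_{je}$; i.e., that for every job $j$,
\[
\frac{\widehat\lambda_j - c_{je}}{\ell_{je}} \;\le\; \frac{\widehat\lambda_{\last(e)} - c_{\last(e)e}}{\ell_{\last(e)e}}.
\]
First I would evaluate the right-hand side exactly. Since $\last(e)$ is by definition a job with $\widetilde{x}_{\last(e)e} > 0$, the defining equation for $\widehat\lambda$ applies on machine $e$, giving $\widehat\lambda_{\last(e)} - c_{\last(e)e} = \delta\,\alpha\,\ell_{\last(e)e}\big(\sum_{i\le\last(e)}\ell_{ie}\widetilde{x}_{ie}\big)^{\alpha-1}$. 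Because $\last(e)$ is the \emph{latest} job to place work on $e$, the prefix load $\sum_{i\le\last(e)}\ell_{ie}\widetilde{x}_{ie}$ equals the final load $L_e := \sum_i \ell_{ie}\widetilde{x}_{ie}$, so the right-hand side of the target inequality simplifies to exactly $\delta\,\alpha\,L_e^{\alpha-1}$.

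Next I would bound the left-hand side using the greedy rule. The key point is that $\widehat\lambda_j$ is the \emph{minimum}, over all machines, of the marginal-cost expression~(\ref{derivative}) evaluated at the instant job $j$ finishes: on machines receiving job $j$ this is an equality by the definition of $\widehat\lambda_j$, and on machines that receive none of job $j$ the load is unchanged by $j$ (so $\sum_{i\le j} = \sum_{i<j}$) and the expression must be at least $\widehat\lambda_j$, for otherwise greedy would have routed part of $j$ there. This yields, for \emph{every} machine $e$ and every job $j$,
\[
\widehat\lambda_j \;\le\; \delta\,\alpha\,\ell_{je}\Big(\sum_{i\le j}\ell_{ie}\widetilde{x}_{ie}\Big)^{\alpha-1} + c_{je}.
\]
Rearranging and dividing by $\ell_{je}$ gives $(\widehat\lambda_j - c_{je})/\ell_{je} \le \delta\,\alpha\big(\sum_{i\le j}\ell_{ie}\widetilde{x}_{ie}\big)^{\alpha-1}$. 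Finally, since loads are monotone nondecreasing (all terms $\ell_{ie}\widetilde{x}_{ie}\ge 0$) and $\alpha-1>0$ makes $t\mapsto t^{\alpha-1}$ increasing, the prefix load is at most $L_e$, so the right-hand side is at most $\delta\,\alpha\,L_e^{\alpha-1}$ — precisely the value computed for $\last(e)$. This establishes the desired inequality for all $j$ and hence the feasibility of $\last(e)$.

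The step I expect to be the main obstacle, and the one deserving the most careful justification, is the greedy-minimality inequality above on machines that did \emph{not} receive job $j$. I would justify it by appealing to the water-filling nature of the algorithm: throughout the assignment of job $j$ the active machines are kept at a common minimal derivative that rises to $\widehat\lambda_j$, and any machine left inactive had its load (and therefore its derivative) frozen during this process at a value at least the common one, hence at least $\widehat\lambda_j$ at the end. Everything else is a direct substitution plus monotonicity of the power function. The only degenerate case is a machine receiving no work at all, for which $L_e = 0$ and $\last(e)$ is vacuous; such a machine contributes nothing to $g(\widehat\lambda)$, so it requires no separate treatment.
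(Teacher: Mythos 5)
Your proof is correct and takes essentially the same route as the paper's: both rest on the greedy-minimality inequality $\widehat\lambda_j \le \delta\,\alpha\,\ell_{je}\big(\sum_{i\le j}\ell_{ie}\widetilde{x}_{ie}\big)^{\alpha-1} + c_{je}$ (holding with equality when $\widetilde{x}_{je}>0$) combined with monotonicity of the machine loads, so that the prefix load --- and hence the ratio $(\widehat\lambda_j - c_{je})/\ell_{je}$ --- is maximized at $\last(e)$. Your write-up merely spells out two points the paper leaves implicit: the water-filling justification of the inequality on machines receiving none of job $j$, and the vacuous case of a machine that receives no work at all.
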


\begin{proof}
  The line of reasoning is the following:
  $$ \varphi(e) = \arg\max_{j} \frac{\big(\widehat \lambda_j - {c_{je}} \big)}{\ell_{je}}
  = \arg\max_{j} \bigg( \delta \cdot {\alpha} \cdot \bigg(\sum_{i\leq j}
  \ell_{je} x_{ie}\bigg)^{\alpha-1} \bigg) = \arg\max_{j} \bigg(
  \bigg(\sum_{i\leq j} \ell_{ie} x_{ie}\bigg)^{\alpha-1} \bigg) =
  \last(e) \, .$$ The first equality is the definition of
  $\varphi(e)$. For the second equality, observe that for any job $k$,
  \[ \widehat \lambda_{k} \leq \delta \cdot \alpha \cdot \ell_{ek}
  (\sum_{i \leq k} \ell_{ie} x_{ie})^{\alpha - 1} + c_{ke} \implies
  \frac{\widehat \lambda_k - c_{ke}}{\ell_{ke}} \leq \delta \,\alpha\,
  (\sum_{i \leq k} \ell_{ie} x_{ie})^{\alpha - 1}\,.\] The expression on
  the right is monotone increasing in $\sum_{i \leq k} \ell_{ie}
  x_{ie}$, the load due to jobs up to (and including $k$). Moreover, it
  is maximized by the last job to assign fractionally to $e$ (since the
  inequality is strict for all other jobs). Since this last job is
  $\last(e)$, the last equality follows.
\end{proof}

\begin{theorem}
\label{thm:finalload}
The online greedy algorithm is $\alpha^\alpha$-competitive.
\end{theorem}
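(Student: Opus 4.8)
The plan is to lean on the weak-duality reduction already set up in the analysis: since $g(\widehat\lambda)\le\OPT$ for every feasible dual assignment, it suffices to prove $g(\widehat\lambda)\ge\frac{1}{\alpha^\alpha}\,\ALG$, where I write $L_e:=\sum_j\ell_{je}\widetilde x_{je}$ for the final load of machine $e$, so that $\ALG=\sum_e L_e^\alpha+\sum_{j,e}c_{je}\widetilde x_{je}$. I would split $g(\widehat\lambda)$ into the fixed-cost piece $\sum_j\widehat\lambda_j$ and the minimized remainder, and lower-bound each separately in terms of the online load cost $\sum_e L_e^\alpha$ and the online assignment cost.

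For the fixed-cost piece I would use two facts about the online run: every job is fully assigned, so $\sum_e\widetilde x_{je}=1$, and $\widehat\lambda_j$ takes the common value $\delta\alpha\ell_{je}\big(\sum_{i\le j}\ell_{ie}\widetilde x_{ie}\big)^{\alpha-1}+c_{je}$ on every machine $e$ with $\widetilde x_{je}>0$. Multiplying this common value by $\widetilde x_{je}$ and summing over $e$ and then $j$ gives $\sum_j\widehat\lambda_j=\delta\alpha\sum_{j,e}\ell_{je}\widetilde x_{je}\big(\sum_{i\le j}\ell_{ie}\widetilde x_{ie}\big)^{\alpha-1}+\sum_{j,e}c_{je}\widetilde x_{je}$, whose second term is exactly the online assignment cost. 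For the first term I would invoke the standard convexity/telescoping estimate: on each machine $e$, convexity of $t\mapsto t^\alpha$ gives $\big(\sum_{i\le j}\ell_{ie}\widetilde x_{ie}\big)^\alpha-\big(\sum_{i<j}\ell_{ie}\widetilde x_{ie}\big)^\alpha\le\alpha\big(\sum_{i\le j}\ell_{ie}\widetilde x_{ie}\big)^{\alpha-1}\ell_{je}\widetilde x_{je}$, and summing over the jobs on $e$ telescopes the left side to $L_e^\alpha$. Hence $\alpha\sum_{j,e}\ell_{je}\widetilde x_{je}\big(\sum_{i\le j}\ell_{ie}\widetilde x_{ie}\big)^{\alpha-1}\ge\sum_e L_e^\alpha$, so $\sum_j\widehat\lambda_j\ge\delta\sum_e L_e^\alpha+\sum_{j,e}c_{je}\widetilde x_{je}$.

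For the minimized remainder I would use the preceding lemmas directly. By \lref[Lemma]{3lem:phi-job} the contribution of machine $e$ is $(1-\alpha)\big(\frac{\widehat\lambda_{\varphi(e)}-c_{\varphi(e)e}}{\alpha\ell_{\varphi(e)e}}\big)^{\alpha/(\alpha-1)}$, and by \lref[Lemma]{3lem:last-phi} I may take $\varphi(e)=\last(e)$. Since $\last(e)$ is the last job assigned to $e$, its running load equals $L_e$, so $\widehat\lambda_{\last(e)}-c_{\last(e)e}=\delta\alpha\ell_{\last(e)e}L_e^{\alpha-1}$; substituting collapses the ratio to $\delta L_e^{\alpha-1}$ and the contribution to $(1-\alpha)\delta^{\alpha/(\alpha-1)}L_e^\alpha$. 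Summing over machines, the remainder equals $(1-\alpha)\delta^{\alpha/(\alpha-1)}\sum_e L_e^\alpha$.

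Combining the two pieces yields $g(\widehat\lambda)\ge\big(\delta+(1-\alpha)\delta^{\alpha/(\alpha-1)}\big)\sum_e L_e^\alpha+\sum_{j,e}c_{je}\widetilde x_{je}$. The final step is to substitute $\delta=\alpha^{-(\alpha-1)}$, which maximizes the bracketed coefficient: then $\delta^{\alpha/(\alpha-1)}=\alpha^{-\alpha}$ and the bracket becomes $\alpha^{-\alpha}(\alpha+(1-\alpha))=\alpha^{-\alpha}$, scaling the load cost by exactly $\frac{1}{\alpha^\alpha}$ while the assignment cost retains coefficient $1\ge\frac{1}{\alpha^\alpha}$. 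Thus $g(\widehat\lambda)\ge\frac{1}{\alpha^\alpha}\ALG$, and weak duality gives $\ALG\le\alpha^\alpha\OPT$. The one genuinely non-mechanical step is the convexity/telescoping bound in the second paragraph: it is the sole place an inequality rather than an identity enters, so it is precisely what pins down the competitive ratio, and the choice of $\delta$ is then forced by optimizing the resulting coefficient.
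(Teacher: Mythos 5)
Your proposal is correct and follows essentially the same route as the paper's own proof: weak duality, the decomposition of $g(\widehat\lambda)$ into $\sum_j \widehat\lambda_j$ plus the per-machine minimized contributions via \lref[Lemma]{3lem:phi-job} and \lref[Lemma]{3lem:last-phi}, the lower bound $\sum_j \widehat\lambda_j \geq \delta \sum_e L_e^\alpha + \sum_{j,e} c_{je}\widetilde{x}_{je}$, and the optimization of $\delta = \alpha^{-(\alpha-1)}$. The only difference is that you spell out the convexity/telescoping justification for the inequality $\alpha \sum_j \ell_{je}\widetilde{x}_{je}\big(\sum_{i\le j}\ell_{ie}\widetilde{x}_{ie}\big)^{\alpha-1} \geq L_e^\alpha$, which the paper asserts without proof---a welcome addition, not a deviation.
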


\begin{proof}
By weak duality it is sufficient to show that $g(\widehat{\lambda})\ge \textrm{ON}/\alpha^\alpha$.
Applying \lref[Lemma]{3lem:phi-job} to the expression for $g(\widehat{\lambda})$ (equation~\eqref{3eq:g}) and substituting the contribution of each machine towards the dual, we get that
  \begin{equation}
  g(\widehat \lambda)
 =   \bigg( \sum_j \widehat \lambda_j +  \sum_e   (1-\alpha) \bigg( \frac{\widehat \lambda_{\last(e)} -
      {c_{\last(e)e}}}{\alpha \ell_{\last(e)e}} \bigg)^{\alpha/(\alpha - 1)}
 \bigg) \label{3eq:z0}
\end{equation}
Now we consider only the first term $\sum_j \widehat \lambda_j$ and evaluate it.
\begin{align}
\sum_j \lambda_j
& = \sum_{j,e} \widehat \lambda_j \widetilde{x}_{je} \\
& = \sum_e \widetilde{x}_{je} \bigg( \sum_j \delta \cdot {\alpha \cdot \ell_{je}} \bigg(\sum_{i\leq j}
  \ell_{ie} \widetilde{x}_{ie}\bigg)^{\alpha-1} +   c_{je}  \bigg) \\
& =  (\delta \cdot \alpha) \sum_e \sum_j  \ell_{je} \widetilde{x}_{je}  \bigg(\sum_{i\leq j}
  \ell_{ie} \widetilde{x}_{ie}\bigg)^{\alpha-1} +  \sum_{j,e} \widetilde{x}_{je} c_{je} \\
& \geq  {\delta} \sum_e \bigg( \sum_j  \ell_{je} \widetilde{x}_{je} \bigg)^\alpha +  \sum_{j,e} \widetilde{x}_{je} c_{je}
\end{align}
Now consider the second term of equation~\eqref{3eq:z0}. Note that if we substitute the value of $\widehat \lambda_{\last(e)}$, it evaluates to
${(1-\alpha)} \delta^{\alpha/(\alpha-1)} \sum_{e} \big( \sum_j \ell_{je} \widetilde{x}_{je} \big)^{\alpha}$
Putting the above two estimates together, we get
  \begin{align}
  g(\widehat \lambda)
& \geq {\delta} \sum_e \bigg( \sum_j  \ell_{je} \widetilde{x}_{je} \bigg)^\alpha +  \sum_{j,e} \widetilde{x}_{je} c_{je}
 + (1-\alpha) \delta^{\alpha/(\alpha-1)} \sum_{e} \bigg( \sum_j \ell_{je} \widetilde{x}_{je} \bigg)^{\alpha} \\
& = \bigg(\delta + (1 - \alpha) \delta^{\alpha/(\alpha-1)} \bigg)  \sum_{e} \bigg( \sum_j \widetilde{x}_{je} \ell_{je} \bigg)^{\alpha} +  \sum_{j,e} \widetilde{x}_{je} c_{je} \label{eq:ratio}\\
& \geq {\textrm{ON}}/{\alpha^\alpha}
\end{align}
The final inequality is due to the choice of $\delta = 1/\alpha^{\alpha-1}$ which maximizes $\left(\delta + (1 - \alpha) \delta^{\alpha/(\alpha-1)}\right)$.
\end{proof}

As observed, e.g., in~\cite{AAGKKV95}, this $O(\alpha)^\alpha$ result is
the best possible, even for the (fractional) \onGAP problem without any
assignment costs. In \lref[Section]{sec:rounding}, we show how to obtain
an $O(\alpha)^\alpha$-competitive algorithm for \emph{integer}
assignments by a very similar greedy algorithm, and dual-fitting, albeit
with a more careful analysis.

\section{Application to Speed Scaling}
\label{sec:speed-scaling}

We now discuss the application of our results for \onGAP to some
well-studied speed scaling problems.  In these problems a collection of jobs arrive over
time.  The $j^{th}$ job arrives at time $r_j$, and has size/work $p_j$.
These jobs must be scheduled on a speed scalable processor that can run
at any non-negative speed.  There is a convex function $P(s)=s^\alpha$
specifying the dynamic power used by the processor as a function of
speed $s$.  The value of $\alpha$ is typically around $3$ for CMOS based
processors. Commonly, one considers objectives $\cal G$ of the form
${\cal S} + \beta {\cal E}$, where $\cal S$ is a scheduling objective,
and $\cal E$ is the energy used by the system. Moreover, the scheduling
objective ${\cal S}$ is a \emph{fractional} sum objective of the form $
\sum_j \sum_t \frac{x_{jt}}{p_j} C_{jt}$, where $C_{jt}$ is the cost of
completing job a unit of work of job $j$ at time $t$, and $x_{jt}$ is the amount of work
completed at time $t$, or the corresponding \emph{integer} sum objective
$\sum_j \sum_t y_{jt} \, C_{jt}$, where $y_{jt}$ indicates whether or
not job $j$ was completed at time $t$. Fractional scheduling objectives
are interesting in their own right (for example, in situations where the
client gains some benefit from the early partial completion of a job),
and are often used in an intermediate step in the analysis of algorithms
for integer scheduling objectives.  

Normally one thinks of the online scheduling algorithm as having two
components: a \emph{job selection policy} to determine the job to run,
and a \emph{speed scaling policy} to determine the processor speed.
However, one gets a different view when one thinks of the online
scheduler as solving online the following mathematical programming
formulation of the problem (which is an instance of the \onGAP problem):
\begin{align*}
\min\quad& \sum_t \bigg(\sum_{j}  x_{jt}\bigg)^\alpha + \sum_{j} \sum_t C_{jt}\, x_{jt}  \\
\textrm{subject to}\quad&\sum_{t} x_{jt} \geq 1 \qquad j = 1, \ldots, n
\end{align*}
Here the variables $x_{jt}$ specify how much work from job $j$ is run at
time $t$. Because we are initially concerned with fractional scheduling
objectives, we can assume without loss of generality that all jobs have
unit length. The arrival of a job $j$ corresponds to the arrival of a
constraint specifying that job $j$ must be completed. Greedily raising
the primal variables corresponds to committing to complete the work of
job $j$ in the cheapest possible way, given the previous commitments.

\paragraph{Two Special Cases.}
A well-studied speed-scaling problem in this class is when the
scheduling objective is energy minimization subject to deadline
feasibility~\cite{YDS,BKP,BansalBCP11,BansalCPK09}: for each job $j$
there is a deadline $d_j$, and $c_{jt}=0$ for $t\in[r_j, d_j]$ and is
infinite otherwise. Our algorithm for \onGAP is essentially equivalent
to the algorithm \emph{Optimal Available (OA)}, introduced in \cite{YDS}
and shown to be $\alpha^\alpha$-competitive in
\cite{BKP}---specifically, the speeds set by both OA and our algorithm
are the same at all times, but the jobs that are run may be different,
since OA uses Earliest Deadline First for scheduling. Our analysis of
the online greedy algorithm for \onGAP is an alternate, and simpler,
analysis of OA than the potential function analysis in \cite{BKP}. In
this instance, our duality based analysis is tight, as OA is no better
than $\alpha^\alpha$-competitive~\cite{YDS,BKP}.

Another well-studied special case is when the objective is total
flow~\cite{AF,BPS,BansalCP09,LLTW08,AndrewLW10,ChanELLMP09,ChanLL10}.
That is, $c_{jt}$ equals $(t-r_j)$ for $t \ge r_j$ and infinite
otherwise. All prior algorithms for this objective assume some variation
of the \emph{balancing} speed scaling algorithm that sets the power
equal to the (fractional) number/weight of unfinished jobs.
Unlike the earlier example above, \onGAP behaves differently than these balancing algorithms for the following reason.
When a job arrives, \onGAP only focuses on choosing assignments which
minimize the rate of increase of the objective, and this rule determines
both the scheduling policy (in fact, the entire schedule of job $j$ is
decided upon arrival) and the power usage over time. However, in the
balancing algorithms the speed profile (and hence power usage) is
entirely determined by the scheduling policy, and the scheduling
policies used are typically the ones optimal for fractional flow like
SJF.  Hence it is likely that our algorithms will actually have a
different work profile from balancing algorithms.

\paragraph{A Note about our Approximation Guarantees.}
A closer examination of our analysis (especially
equation~\eqref{eq:ratio}) shows that our algorithm has a
\emph{Lagrangian multiplier preserving} property: we get that our convex
cost + $\alpha^\alpha$ times the linear term is at most $\alpha^\alpha$
times the dual. This separation between the linear and non-linear terms
in the objective happens because the constraints are linear, and when we
compute the dual, the dual variables are involved in the linear terms
whereas the convex terms in the minimizer are identical to their
expressions in the primal. In order to argue about the dual minimizer,
this somehow forces us to be exact on the linear terms. This can perhaps
explain why our analysis is tight for the deadline feasibility and load
balancing problems~\cite{BKP, AAFPW, AAGKKV95} (where there are no
linear terms), but is an $O(\alpha^\alpha)$-factor worse than the
algorithm of~\cite{BansalCP09} for fractional flow+energy (which has
non-trivial linear terms).

\paragraph{Comparison to Previous Techniques.}
In most potential function-based analyses of speed scaling
problems in the literature, the potential function is defined to be the
future cost for the online algorithm to finish the jobs if the remaining
sizes of the jobs were the lags of the jobs, which is how far the online
algorithm is behind on the job~\cite{ImMP11}.  A seemingly necessarily
condition to apply this potential function technique is that there must
be is a relatively simple algebraic expression for the future cost for
the online scheduling algorithm starting from an arbitrary state.  As it
is not clear how to obtain such an algebraic expression for the most
obvious candidate algorithms for nonlinear scheduling objectives,~\footnote{
For $L_\alpha$ norms of flow on a single fixed speed processor, no potential function is required
to prove scalability of natural online algorithms~\cite{BansalP10}.
For more complicated non-work conserving machine environments, there are analyses that use a 
potential function that is a rough approximation of future costs~\cite{GuptaIKMP10,ImM10}. But despite some effort,
it has not been clear how to extend these potential functions to apply to $L_\alpha$ norms of
flow in the speed scaling setting.} this
to date has limited the application of this potential function method to
speed scaling problems with linear scheduling objectives. However, our
dual-based analysis for \onGAP yields an online greedy speed scaling algorithm that is $O_\alpha(1)$-competitive for any sum scheduling
objective.

Our algorithm for \onGAP has the advantage that, at the release time of a job, it can commit to the
client exactly the times that each portion of the job will be run. One can certainly imagine situations
when this information would be useful to the client. Also the \onGAP analysis applies to a wider class
of machine environments than does the previous potential function based analyses in the literature. For example,
our analysis of the \onGAP algorithm can handle the case that the processor is unavailable at certain
times, without any modification to the analysis. Although this generality has the disadvantage that it
gives sub-optimal bounds for some problems, such as when the scheduling objective is total flow.

By speeding up the processor by a $(1+\epsilon)$ factor, one can obtain an online speed scaling algorithm that one can show,
using known techniques~\cite{BLMP1,BPS}, has
competitive ratios  at most $\min(\alpha^\alpha (1+\epsilon)^\alpha, \frac{1}{\epsilon})$ for the corresponding integer
scheduling objective.

\section{Routing with Speed Scalable Routers}
\label{sec:routing}

In this section we consider the following online routing
problem. Routing requests in a graph/network arriving over time.  The
$j^{th}$ request consists of a source $s_j$, a sink $t_j$, and a flow
requirement $f_j$.  In the unsplittable flow version of the problem the
online algorithm must route $f_j$ units of flow along a single $(s_j,
t_j)$-path. In the splittable flow version of this problem, the online
algorithm may partition the $f_j$ units of flow among a collection of $(s_j,
t_j)$-paths. In either case, we assume speed scalable network elements
(routers, or links, or both) that use power $\ell^\alpha$ when they have
load $\ell$, where the load is the sum of the flows through the
element. We consider the objective of minimizing the aggregate power.
We show that an intuitive online greedy algorithm is
$\alpha^\alpha$-competitive using the dual function of a mathematical
programming formulation as a lower bound to optimal.

\textbf{High Level Idea:} The proof will follow the same general
approach as for \onGAP: we define dual variables $\widehat\lambda_j$ for
the demand pairs, but now the minimization problem (which is over flow
paths, and not just job assignments) is not so straight-forward: the
different edges on a path $p$ might want to set $f(p)$ to different
values. So we do something seemingly bad: we \emph{relax} the dual to
decouple the variables, and allow each (edge, path) pair to choose its
own ``flow'' value $f(p,e)$. And which of these should we use as our
surrogate for $f(p)$? We use a convex combination $\sum_{e \in p} h_e\,
f(p,e)$---where the multipliers $h(e)$ are chosen \emph{based on the
  primal loads}(!), hence capturing which edges are important and which
are not.

\subsection{The Algorithm and Analysis}

We first consider the splittable flow version of the problem. Therefore, we can
assume without loss of generality that all flow requirements are unit,
and all sources and sinks are distinct (so we can associate a unique
request $j(p)$ with each path $p$). This will also allow us to order
paths by the order in which flow was sent along the paths. We now  model
the problem by the following primal optimization formulation:
\begin{align*}
\min\quad& \sum_e \bigg(\sum_j \sum_{p  \ni e: p \in P_j}  f(p)\bigg)^\alpha  \\
\textrm{subject to}\quad&\sum_{p \in P_j} f(p) \geq 1 \qquad j = 1, \ldots, n
\end{align*}
where $P_j$ is the set of all $(s_j, t_j)$ paths,
and $f(p)$ is a non-negative real variable denoting the amount of flow routed on the path $p$.
In this case, the dual function is:
\begin{align*}
g(\lambda) =   \min_{f(p)} \bigg( \sum_j \lambda_j +  \sum_e \bigg(\sum_j \sum_{p  \ni e: p \in P_j}  f(p)\bigg)^\alpha
-\sum_{j, p \in P_j} \lambda_j f(p) \bigg)
\end{align*}
One can think of the dual function as a routing problem with the same instance, but without the constraints
that at least a unit of flow must be routed for each request.
In the objective, in addition to energy costs, a fixed cost of $\lambda_j$ is payed for each request $j$, and
a payment of $\lambda_j$ is received for each unit of flow routed from $s_j$ to $t_j$.

\medskip
\noindent
{\bf Description of the Online Greedy Algorithm:} To route flow for request $j$, flow is continuously routed along the paths that will increase costs the
least until enough flow is routed to satisfy the request. That is, flow is routed along all $(s_j, t_j)$ paths $p$ that
minimize $\sum_{e \in p}  \alpha \cdot \left( \sum_{q \le p: q \ni e} f(q) \right)^{\alpha -1}$.
For analysis purposes, after the flow for request $j$ is routed, we define
$$\widehat \lambda_j = \alpha \delta \bigg(\sum_{e \in p} \sum_{q \le p: q \ni e }  f(q)\bigg)^{\alpha-1} $$
where $p$ is any path along which flow for request $j$ was routed, and
$\delta$ is a constant (later set to $\frac{1}{\alpha^{\alpha -1}}$).
\medskip

\textbf{The Analysis:}
Unfortunately, unlike the previous section for load balancing, it is not so clear how to compute the dual function $g(\widehat \lambda)$  or its minimizer since the variables cannot be nicely decoupled as we did there (per machine). In order to circumvent this difficulty,
we consider the following \emph{relaxed} function $\widehat g(\widehat \lambda, h)$, which does not enforce the constraint that flow must
be routed along paths. This enables us to decouple variables and then argue about the objective value.
Indeed, let $\widetilde f(p)$ be the final flow on path $p$ for the routing produced by the online algorithm.
Let $h(e) = \alpha \sum_{p \ni e} \widetilde f(p)^{\alpha -1}$ be the incremental cost of routing additional flow along edge $e$,
and $h(p) = \sum_{e \in p} h(e)$ be the incremental cost of routing additional flow along path $p$. We then define:
\begin{align*}
\widehat g(\widehat \lambda, h) =   \min_{f(p,e)} \bigg(  \sum_j \widehat \lambda_j +\sum_e \bigg(\sum_j \sum_{p \ni e: p \in P_j}  f(p, e)\bigg)^\alpha
-\sum_{j} \widehat \lambda_j \sum_{p \in P_j} \sum_{e \in p} \frac{h(e)}{h(p)} f(p,e) \bigg)
\end{align*}
Conceptually, $f(p,e)$ can be viewed as the load placed on edge $e$ by request $j(p)$.
In $\widehat g(\widehat \lambda, h)$, the scheduler has the option of increasing the load on individual edges $e \in p \in P_j$, but the income
from edge $e$ will be a factor of $\frac{h(e)}{h(p)}$ less than the income achieved in $g(\widehat \lambda)$.
In  \lref[Lemma]{claim:relaxg} we prove that $\widehat g(\widehat \lambda, h)$ is
a lower bound for $g(\widehat \lambda)$.
We then proceed as in the analysis of \onGAP.
 \lref[Lemma]{lemma:computedual} shows how the minimizer and value of $\widehat g(\widehat \lambda, h)$
can be computed, and \lref[Lemma]{claim:jh} shows how to bound some of the dual variables
in terms of the final online primal solution.

\begin{lemma}
\label{claim:relaxg}
For the above setting of $h(\cdot)$, $\widehat g(\widehat \lambda, h) \le g(\widehat \lambda)$.
\end{lemma}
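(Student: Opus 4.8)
The plan is to show that $\widehat g(\widehat\lambda,h)$ is a genuine relaxation of $g(\widehat\lambda)$: every candidate solution feeding into the minimization that defines $g(\widehat\lambda)$ embeds into the (larger) feasible set of the minimization defining $\widehat g(\widehat\lambda,h)$ \emph{without changing the objective value}. Since $\widehat g$ is an infimum over a strictly richer collection of variables, this immediately yields $\widehat g(\widehat\lambda,h)\le g(\widehat\lambda)$. Note that both objectives carry the identical constant term $\sum_j\widehat\lambda_j$, so I only need to track the energy term and the income term.

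Concretely, I would take any setting of the per-path variables $f(p)$ in $g(\widehat\lambda)$ (in particular the minimizer) and define the per-(path,edge) variables of $\widehat g$ by the ``diagonal'' assignment $f(p,e):=f(p)$ for every edge $e\in p$. The first step is to observe that this assignment leaves the energy term untouched: for each edge $e$ the load $\sum_j\sum_{p\ni e}f(p,e)$ equals $\sum_j\sum_{p\ni e}f(p)$, so the term $\sum_e(\,\cdot\,)^\alpha$ is literally identical in the two objectives.

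The only point that needs care is the income term, since $\widehat g$ replaces the single payment $\widehat\lambda_j f(p)$ by the weighted expression $\widehat\lambda_j\sum_{e\in p}\tfrac{h(e)}{h(p)}f(p,e)$. Under the diagonal assignment this becomes $\widehat\lambda_j f(p)\sum_{e\in p}\tfrac{h(e)}{h(p)}$, so the key identity to verify is $\sum_{e\in p}\tfrac{h(e)}{h(p)}=1$. But this is immediate from the definition $h(p)=\sum_{e\in p}h(e)$: the multipliers $h(e)/h(p)$ form a convex combination over the edges of $p$, so the income term of $\widehat g$ collapses exactly to $\sum_{j,\,p\in P_j}\widehat\lambda_j f(p)$, matching $g$.

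Putting these together, the $\widehat g$-objective evaluated at the diagonal point equals the $g$-objective at $f(p)$. Taking $f(p)$ to be the minimizer of $g(\widehat\lambda)$, and using that $\widehat g(\widehat\lambda,h)$ is a minimum over all choices of $f(p,e)$ (a superset obtained by dropping the constraint $f(p,e)=f(p,e')$ for $e,e'\in p$), gives $\widehat g(\widehat\lambda,h)\le g(\widehat\lambda)$. There is no real obstacle beyond spotting that the multipliers $h(e)/h(p)$ are precisely normalized so that the diagonal embedding is objective-preserving; this normalization is exactly what makes the decoupling ``free'' in the direction we need.
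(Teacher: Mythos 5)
Your proposal is correct and follows exactly the paper's own argument: both take the minimizer $f(p)$ of $g(\widehat\lambda)$, embed it diagonally via $f(p,e):=f(p)$, and use the normalization $\sum_{e\in p} h(e)/h(p)=1$ (which holds by the definition $h(p)=\sum_{e\in p}h(e)$) to see that the income term collapses back to $\sum_{j,\,p\in P_j}\widehat\lambda_j f(p)$, so the relaxed objective evaluated at this feasible point equals $g(\widehat\lambda)$ and the minimum $\widehat g(\widehat\lambda,h)$ can only be smaller. No differences of substance.
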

\begin{proof}
We show that there is a feasible value of $\widehat g(\widehat \lambda, h)$ that is less than $g(\widehat \lambda)$.
Let the value of  $f(p, e)$ in $\widehat g(\widehat \lambda, h)$ be the same as the value of $f(p)$ in $g(\widehat \lambda)$.
Plugging these values for $f(p, e)$ into the expression for $\widehat g(\widehat \lambda, h)$, and simplifying, we get:
\begin{align*}
\widehat g(\widehat \lambda, h) &\le \sum_j \widehat \lambda_j +
\sum_e \bigg(\sum_j \sum_{p  \ni e: p \in P_j}  f(p)\bigg)^\alpha
-\sum_{j} \widehat \lambda_j \sum_{p \in P_j} f(p) \sum_{e \in p}  \frac{h(e)}{h(p)} \\
&= \sum_j \widehat \lambda_j +\sum_e \bigg(\sum_j \sum_{p  \ni e: p \in P_j}  f(p)\bigg)^\alpha
-\sum_{j} \widehat \lambda_j \sum_{p \in P_j} f(p) \\
&= g(\widehat \lambda)
\end{align*}
The first equality holds by the definitions of $h(e)$ and $h(p)$, and the second equality holds by the optimality of $f(p)$.
\end{proof}

\begin{lemma}
 \label{lemma:computedual}
 There is a minimizer $\widehat{f}$ of $\widehat g(\widehat \lambda, h)$ in which for each edge $e$, there is a
 single path $p(e)$ such that $\widehat f(p,e)$ is positive, and
 $\widehat f(p(e), e) = \left(\frac{\widehat \lambda_{j(p(e))}h(e)}{\alpha \cdot h(p(e))} \right)^{1/(\alpha-1)}$.
 \end{lemma}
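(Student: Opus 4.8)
The plan is to mimic the structure of the proof of \lref[Lemma]{3lem:phi-job} from the \onGAP section, exploiting the fact that the relaxed dual $\widehat g(\widehat\lambda, h)$ has been deliberately engineered so that the variables $f(p,e)$ decouple across edges. The key observation is that in the objective of $\widehat g(\widehat\lambda, h)$, the convex term $\sum_e \big(\sum_j \sum_{p \ni e} f(p,e)\big)^\alpha$ is a sum of independent per-edge contributions, and the linear (income) term $\sum_j \widehat\lambda_j \sum_{p \in P_j}\sum_{e \in p} \frac{h(e)}{h(p)} f(p,e)$ also separates edge-by-edge. Therefore, unlike the original $g(\widehat\lambda)$, the minimization can be performed independently on each edge $e$, and within each edge we are free to analyze the choice of paths routed through it.

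First I would fix an edge $e$ and write down its contribution to $\widehat g(\widehat\lambda, h)$, namely $\big(\sum_{p \ni e} f(p,e)\big)^\alpha - \sum_{p \ni e} \frac{\widehat\lambda_{j(p)} h(e)}{h(p)} f(p,e)$, where the sum ranges over all paths $p$ passing through $e$. This is structurally identical to expression~\eqref{3eq:9}, with the coefficient $(\widehat\lambda_j - c_{je})/\ell_{je}$ replaced by the per-path income rate $\frac{\widehat\lambda_{j(p)} h(e)}{h(p)}$ (and with the load coefficient $\ell_{je}$ now equal to $1$, since each unit of $f(p,e)$ adds one unit of load to $e$). Next I would invoke the same single-path argument as in \lref[Lemma]{3lem:one-job}: if two paths $p$ and $p'$ through $e$ both carry positive $\widehat f(\cdot, e)$, then shifting mass between them keeps the convex term fixed (it depends only on the sum $\sum_{p\ni e} f(p,e)$) while not increasing the linear term, so we may zero out all but the single path $p(e)$ maximizing the income rate $\frac{\widehat\lambda_{j(p)} h(e)}{h(p)}$. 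Finally, setting the partial derivative of the single-path expression to zero gives $\alpha\, \widehat f(p(e),e)^{\alpha-1} = \frac{\widehat\lambda_{j(p(e))} h(e)}{h(p(e))}$, which rearranges to exactly the claimed value $\widehat f(p(e),e) = \big(\frac{\widehat\lambda_{j(p(e))} h(e)}{\alpha\, h(p(e))}\big)^{1/(\alpha-1)}$.

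I expect the main subtlety — rather than an outright obstacle — to be bookkeeping around the decoupling: one must be careful that the income coefficient attached to $f(p,e)$ is precisely $\widehat\lambda_{j(p)} \frac{h(e)}{h(p)}$ and that, after fixing $e$, this coefficient is a genuine constant (it does not depend on $f$, since $h(\cdot)$ is defined from the \emph{final online} flow $\widetilde f$, not from the optimization variables). Once that is verified, the per-edge minimization is a one-variable convex problem and the result follows mechanically. The first-order condition is valid because $\alpha > 1$ makes the per-edge objective strictly convex in $\sum_{p \ni e} f(p,e)$, so the interior stationary point is indeed the minimizer, and positivity of $\widehat f(p(e),e)$ holds provided the maximizing income rate is positive (which it is whenever the corresponding $\widehat\lambda_{j(p(e))} > 0$).
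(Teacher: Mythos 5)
Your proposal is correct and follows essentially the same route as the paper's own proof: the paper likewise invokes the exchange argument of \lref[Lemma]{3lem:one-job} to reduce to a single path per edge, and then uses the first-order condition $\alpha\, \widehat f(p(e),e)^{\alpha-1} = \widehat\lambda_{j(p(e))}\frac{h(e)}{h(p(e))}$ to identify $p(e)$ as the path maximizing the income rate and to solve for the claimed value. Your added care about the per-edge decoupling and the fact that $h(\cdot)$ is a constant determined by the online solution $\widetilde f$ simply makes explicit what the paper leaves implicit.
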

 \begin{proof} The argument that  $\widehat g(\widehat \lambda, h)$ has a minimizer where each
 edge only has flow from one request follows the same reasoning as in the proof of Lemma
  \ref{3lem:one-job}.
Once we know only one request sends flow on any edge we can use calculus to identify the minimizer and the value which achieves it. Indeed, we get
that
$p(e) = \arg \max_{p \ni e} \frac{ \widehat \lambda_{j(p)} h(e)}{h(p)}$.
and the value of $\widehat f(p(e), e)$ is set so that
the incremental energy cost would just offset the incremental income from routing the flow, that is
$\alpha \, \widehat f(p(e), e)^{\alpha -1}  = \widehat \lambda_{j(p(e))} \frac{h(e)}{h(p(e))}$.
Solving for  $\widehat f(p(e), e)$, the result follows.
\end{proof}

\begin{lemma}
\label{claim:jh}
$\widehat \lambda_{j(p(e))} \le \delta \cdot h(p(e))$
\end{lemma}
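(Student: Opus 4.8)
The goal is to bound the dual variable $\widehat\lambda_{j(p(e))}$ --- which was set at the moment its request was served --- by $\delta$ times the \emph{final} incremental path cost $h(p(e))$. The plan is to route this through two observations: first, that $\widehat\lambda_{j(p(e))}$ equals $\delta$ times the minimum marginal routing cost over all paths of request $j(p(e))$, and since $p(e)$ is itself such a path its marginal cost is at least this minimum; second, that edge loads only grow over time, so each edge's contribution to the marginal cost of $p(e)$ when the request was served is dominated by its contribution $h(e)$ computed from the final loads. Writing $p^* = p(e)$ and $j^* = j(p(e))$, note at the outset that $p^* \in P_{j^*}$, since $j(\cdot)$ associates with each path the request whose source--sink pair it connects.

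First I would recall the greedy rule: when request $j^*$ is served, flow is pushed only along $(s_{j^*}, t_{j^*})$-paths minimizing $\sum_{e' \in p} \alpha \big(\sum_{q \le p : q \ni e'} f(q)\big)^{\alpha-1}$, and $\widehat\lambda_{j^*}$ is defined as $\delta$ times this marginal cost for any path that actually carried flow. By the same balancing argument used in \lref[Lemma]{3lem:last-phi}, all served paths share this common minimal marginal cost, and every other path in $P_{j^*}$ has marginal cost at least as large. Since $p^* \in P_{j^*}$, this yields
\[
\frac{\widehat\lambda_{j^*}}{\delta} \;\le\; \sum_{e' \in p^*} \alpha \bigg(\sum_{q \le p^* : q \ni e'} f(q)\bigg)^{\alpha-1},
\]
where the inner sums are the edge loads at the instant request $j^*$ finishes being routed.

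Next I would invoke monotonicity of the loads. The online algorithm only ever adds flow and never removes it, so for every edge $e'$ the load $\sum_{q \le p^* : q \ni e'} f(q)$ present when $j^*$ was served is at most the final load $\sum_{q \ni e'} \widetilde f(q)$. Because $\alpha - 1 > 0$, the map $x \mapsto x^{\alpha-1}$ is increasing, so each summand above is at most $\alpha \big(\sum_{q \ni e'} \widetilde f(q)\big)^{\alpha-1} = h(e')$. Summing over $e' \in p^*$ and using $h(p^*) = \sum_{e' \in p^*} h(e')$ gives $\widehat\lambda_{j^*}/\delta \le h(p^*)$, which is exactly the claim $\widehat\lambda_{j(p(e))} \le \delta\, h(p(e))$.

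The main obstacle is the first step: one must argue that $p^*$, which need not be a path the algorithm actually used, still has marginal cost at least $\widehat\lambda_{j^*}/\delta$. This is precisely the greedy optimality/balancing property --- the served paths realize the minimum marginal cost and no competing path in $P_{j^*}$ does better --- and it is the crux of the argument, playing the role that \lref[Lemma]{3lem:last-phi} plays for \onGAP. Once that is in hand, the monotonicity passage and the inequality $\alpha > 1$ make the remainder routine.
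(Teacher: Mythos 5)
Your proof is correct and takes essentially the same approach as the paper's: greedy optimality of the selected paths (so every $(s_{j^*},t_{j^*})$-path, including $p(e)$, has marginal cost at least $\widehat\lambda_{j^*}/\delta$ at the moment the request is fully routed) combined with monotonicity of the edge loads under the online algorithm. The paper merely phrases this as two cases---$p(e)$ carried flow, or it did not---and your single unified argument subsumes both.
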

\begin{proof}
$\widehat \lambda_{j(p(e))}$ is $\delta$ times the rate at which the energy cost was increasing for
the online algorithm when it routed the last bit of flow for request $j(p(e))$.
$h(p(e))$ is the rate at which the energy cost would increase for the online algorithm if additional flow was pushed along
$p(e)$ after the last request was satisfied.  If $p(e)$ was a path on which the online algorithm routed flow,
then the result follows from the fact the online algorithm never decreases the flow on any edge.
If $p(e)$ was not a path on which the online algorithm routed flow, then the result follows from the fact
that at the time that the online algorithm was routing flow for request $j(p(e))$, $p(e)$ was more costly than
the selected paths, and the cost can't decrease subsequently due to monotonicity of the flows in the online solution.
\end{proof}

\begin{theorem}
\label{thm:route}
The online greedy algorithm is $\alpha^\alpha$ competitive.
\end{theorem}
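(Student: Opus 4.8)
The plan is to mirror the proof of \lref[Theorem]{thm:finalload} for \onGAP, but to work with the relaxed dual $\widehat g(\widehat\lambda, h)$ in place of $g(\widehat\lambda)$. By weak duality $g(\widehat\lambda)$ lower-bounds the optimal cost, and by \lref[Lemma]{claim:relaxg} we have $\widehat g(\widehat\lambda, h) \le g(\widehat\lambda)$. Hence it suffices to prove the single inequality $\widehat g(\widehat\lambda, h) \ge \textrm{ON}/\alpha^\alpha$, where $\textrm{ON} = \sum_e (\sum_{p \ni e}\widetilde f(p))^\alpha$ is the online energy: this yields $\OPT \ge g(\widehat\lambda) \ge \widehat g(\widehat\lambda, h) \ge \textrm{ON}/\alpha^\alpha$, which is exactly the claimed competitiveness.

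First I would evaluate $\widehat g(\widehat\lambda, h)$ at its minimizer. Substituting the minimizer from \lref[Lemma]{lemma:computedual} into the objective and simplifying exactly as in \lref[Lemma]{3lem:phi-job}, the contribution of each edge $e$ collapses to a single closed-form term, giving
\[
\widehat g(\widehat\lambda, h) = \sum_j \widehat\lambda_j \; + \; (1-\alpha)\sum_e \Big(\tfrac{\widehat\lambda_{j(p(e))}\, h(e)}{\alpha\, h(p(e))}\Big)^{\alpha/(\alpha-1)}.
\]
I would then bound the two terms separately, as in the \onGAP analysis. For the first term I use $\sum_{p\in P_j}\widetilde f(p)=1$ to write $\sum_j\widehat\lambda_j = \sum_j\widehat\lambda_j\sum_{p\in P_j}\widetilde f(p)$, replace $\widehat\lambda_j$ by $\delta$ times the marginal cost of any path the greedy algorithm used for request $j$, and reindex the resulting double sum (over requests, and over the edges on their paths) into a single sum over edges. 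Fixing an edge $e$ and ordering the paths through it by the order in which flow was routed, the per-edge contribution becomes $\sum_k f_k\, S_k^{\alpha-1}$, with $S_k$ the load on $e$ after the $k$-th path; the standard integral estimate $\sum_k f_k\, S_k^{\alpha-1} \ge \int_0^{\ell_e} t^{\alpha-1}\,dt = \ell_e^\alpha/\alpha$, where $\ell_e$ is the final load, then delivers $\sum_j\widehat\lambda_j \ge \delta\,\textrm{ON}$. For the second (negative) term, \lref[Lemma]{claim:jh} gives $\widehat\lambda_{j(p(e))}/h(p(e)) \le \delta$; since $1-\alpha<0$ this lower-bounds each summand, and substituting $h(e)=\alpha\,\ell_e^{\alpha-1}$ turns the term into $(1-\alpha)\delta^{\alpha/(\alpha-1)}\,\textrm{ON}$.

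Combining the two bounds gives $\widehat g(\widehat\lambda,h) \ge \big(\delta + (1-\alpha)\delta^{\alpha/(\alpha-1)}\big)\,\textrm{ON}$, and the choice $\delta = 1/\alpha^{\alpha-1}$ makes the bracketed constant equal to $1/\alpha^\alpha$, precisely as in \eqref{eq:ratio}, completing the argument. I expect the main obstacle to be the per-edge reindexing-plus-integral step for $\sum_j\widehat\lambda_j$: one must correctly convert a sum in which each request is charged along an entire path into independent per-edge telescoping sums before the integral inequality applies, and verify that the marginal-cost definition of $\widehat\lambda_j$ is consistent across all paths greedy uses for request $j$ (which holds because the greedy rule equalizes marginal costs over exactly those paths). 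Everything else is a faithful transcription of the \onGAP argument, with the ratios $h(e)/h(p(e))$ playing the bookkeeping role that the loads and assignment costs played there.
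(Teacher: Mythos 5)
Your proposal is correct and follows essentially the same route as the paper's proof: lower-bound $\widehat g(\widehat\lambda,h)$ via \lref[Lemma]{lemma:computedual}, apply \lref[Lemma]{claim:jh} to the negative term, rewrite $\sum_j\widehat\lambda_j$ using feasibility and the definition of $\widehat\lambda_j$, and close with the choice $\delta=1/\alpha^{\alpha-1}$. The only differences are cosmetic ones: you bound the two terms separately rather than in one chain of inequalities, and you spell out the per-edge Riemann-sum estimate that the paper leaves implicit in its step from \eqref{eq:q5} to \eqref{eq:q7}.
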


\begin{proof}
We will show that $\widehat g(\widehat \lambda, h)$ is at least the online cost ON divided by $\alpha^\alpha$,
which is sufficient since $\widehat g(\widehat \lambda, h)$ is a lower bound to $g(\widehat \lambda)$ by
\lref[Lemma]{claim:relaxg}, and since $g(\widehat \lambda)$ is a lower bound to optimal.
\begin{align}
\widehat g(\widehat \lambda, h) &=   \min_{f(p,e)} \bigg(  \sum_j \widehat \lambda_j +\sum_e \bigg(\sum_j \sum_{p \ni e: p \in P_j}  f(p, e)\bigg)^\alpha
-\sum_{j} \widehat \lambda_j \sum_{p \in P_j} \sum_{e \in p} \frac{h(e)}{h(p)} f(p,e) \bigg) \label{eq:q0}\\
&= \sum_j \widehat \lambda_j  - (\alpha - 1)
\sum_e \bigg(\frac{\widehat \lambda_{j(p(e))}h(e)}{\alpha \cdot h(p(e))} \bigg)^{\alpha/(\alpha-1)} \label{eq:q1}\\
&\ge \sum_j \widehat \lambda_j  - (\alpha - 1)
\sum_e \bigg(\frac{\delta \cdot h(e)}{\alpha } \bigg)^{\alpha/(\alpha-1)} \label{eq:q2}\\
&= \sum_j \widehat \lambda_j  - (\alpha - 1) \delta^{\alpha/(\alpha-1)}
\sum_e \bigg(\sum_{p \ni e} \widetilde f(p) \bigg)^\alpha \label{eq:q3}\\
&= \sum_j \widehat \lambda_j  \sum_{p \in P_j} \widetilde f(p) - (\alpha - 1) \delta^{\alpha/(\alpha-1)}
\sum_e \bigg(\sum_{p \ni e} \widetilde f(p) \bigg)^\alpha \label{eq:q4}\\\
&=   \delta \alpha \sum_j  \sum_{p \in P_j} \widetilde f(p) \bigg(\sum_{e \in p} \sum_{q \le p: q \ni e} \widetilde f(q)\bigg)^{\alpha-1}- (\alpha - 1) \delta^{\alpha/(\alpha-1)}
\sum_e \bigg(\sum_{p \ni e} \widetilde f(p) \bigg)^\alpha \label{eq:q5}\\
&\ge \delta \sum_e \bigg(\sum_{p \ni e} \widetilde f(p) \bigg)^\alpha - (\alpha - 1) \delta^{\alpha/(\alpha-1)}
\sum_e \bigg(\sum_{p \ni e} \widetilde f(p) \bigg)^\alpha\label{eq:q7}\\
&= \frac{1}{\alpha^\alpha} \sum_e \bigg(\sum_{p \ni e} \widetilde f(p) \bigg)^\alpha\label{eq:q8}\\
&\ge \textrm{ON}/\alpha^\alpha
\end{align}
The equality in line (\ref{eq:q0}) is the definition of $\widehat g(\widehat \lambda, h)$.
The equality in line (\ref{eq:q1}) follows from Lemma  \ref{lemma:computedual}.
The inequality in line (\ref{eq:q2}) follows from \lref[Lemma]{claim:jh}.
The equality in line (\ref{eq:q3}) follows from the definition of $h(e)$.
The equality in line (\ref{eq:q4}) follows from the feasibility of $\widetilde f$.
The equality in line (\ref{eq:q5}) follows from the definition of $\widehat \lambda$.
The equality in line (\ref{eq:q7}) follows from the definition of $\delta$.
\end{proof}


While the above algorithm only gives a splittable routing, i.e., a fractional routing, we note that the ideas of the next section, \lref[Section]{sec:rounding}, can be used to obtain an $O_\alpha(1)$-competitive algorithm for \emph{integer flow}
as well by using a slightly modified primal program (we have to handle non-uniform demands, and also strengthen the basic convex program to prevent some trivial integrality gaps. The next section described how we can handle these issues for the load balancing problem. 

\section{Online Load Balancing: Integral Assignments}
\label{sec:rounding}

For simplicity, let us consider online integer load balancing
\emph{without} assignment costs; it is easy to see the extension to the
other problems that we consider.  In this problem each job has the
values $\ell_{je}$, and the goal is to integrally assign it to a single
machine so as to minimize the sum $\sum_e (\sum_j X_{je}
\ell_{je})^\alpha$ where $X_{je}$ is the indicator variable for whether
job $j$ is assigned to machine~$e$.  The most natural reduction to our
general model \onGAP is to set all $c_{je}$'s to $0$.  However, the
convex relaxation for this setting has a large integrality gap with
respect to integral solutions. For example, consider the case of just a
single job which splits into $m$ equal parts (where $m$ is the number of
machines)---the integer primal optimal pays a factor of $m^{\alpha-1}$
times the fractional primal optimal. To handle this case, we add a fixed
assignment cost of $c_{je} = \ell_{je}^\alpha$ for assigning job $j$ to
machine~$e$. It is easy to see that the cost of an optimal integral
solution at most doubles in this relaxation. This is the convex program
that we use for the rest of this section.

\subsection{Approach I: Integer Assignment}
\label{sec:integer1}

In this section, we show that an algorithm which gives an
$O(\alpha)^\alpha$-competitive ratio.  Consider the following greedy
algorithm: when job $j$ arrives, it picks the machine $e$ that minimizes
\[ \delta\cdot \alpha \cdot \ell_{je} ( \sum_{i < j} \ell_{ie}
x_{ie})^{\alpha - 1} + \ell_{je}^\alpha,\] and set $x_{je} =
1$. Moreover, set $\widehat\lambda_j$ for job $j$ to be precisely the
quantity above. Let $\widetilde x_{ie}$ be the final settings of the
primal variables.

For the analysis, we again need to set the $\widehat x_j$'s. For machine
$e$, again consider $\varphi(e)$ and $\last(e)$ as defined in the
previous proofs. We can no longer claim that $\last(e)$ is a feasible
setting for $\varphi(e)$. However, we can claim that the load on machine
$e$ that is seen by $\varphi(e)$ (\emph{and indeed, by any job $k$}) is
at most the load seen by $\last(e)$ when it arrived, \emph{plus} the length
$\ell_{\last(e)e}$. Hence
\begin{gather}
  \widehat \lambda_{\varphi(e)} \leq \delta \, \alpha\,
  \ell_{\varphi(e)e} ( \sum_{i < \last(e)} \ell_{ie} \widetilde{x}_{ie} +
  \ell_{\last(e)e})^{\alpha - 1} + \ell_{\varphi(e)e}^\alpha \implies
  \frac{\widehat \lambda_{\varphi(e)} -
    \ell_{\varphi(e)e}^\alpha}{\alpha\ell_{\varphi(e)e}} \leq \delta \,
  ( \sum_{i \leq \last(e)} \ell_{ie} \widetilde{x}_{ie})^{\alpha - 1}.\label{eq:1}
\end{gather}
But since $\last(e)$ is the last job on machine $e$, this last
expression is exactly $\delta \, ( \sum_{i} \ell_{ie} \widetilde{x}_{ie})^{\alpha -
  1}$.  Now recall the dual from~(\ref{3eq:z0}). Observing that that
$\alpha > 1$, we can use the calculations we just did to get
\begin{align*}
  g(\widehat \lambda) \geq& \sum_{j} \widehat \lambda_j + (1 - \alpha)
  \delta^{\alpha/(\alpha - 1)}\,
  \sum_e \bigg(
  \sum_{i} \ell_{ie} \widetilde{x}_{ie}
  \bigg)^{\alpha} \\
  =&   \alpha \delta\,
  \sum_{e,j} \ell_{je} \widetilde{x}_{je} \bigg( \sum_{i: i < j } \ell_{ie} \widetilde{x}_{ie}
  \bigg)^{\alpha-1} + \sum_{j,e} \ell_{je}^\alpha  \widetilde{x}_{je}
  + (1-\alpha) \delta^{\alpha/(\alpha - 1)}\,
  \bigg( \sum_{j, e} \ell_{je} \widetilde{x}_{je} \bigg)^{\alpha} \\
  =&  \sum_{e} \left( \alpha \delta\,
    \sum_{j \in S_e} \ell_{je}  \bigg( \sum_{i \in S_e: i < j } \ell_{ie} 
    \bigg)^{\alpha-1} + \sum_{j \in S_e} \ell_{je}^\alpha 
    + (1-\alpha) \delta^{\alpha/(\alpha - 1)}\,
    \bigg( \sum_{j \in S_e} \ell_{je} \bigg)^{\alpha} \right) \\
  \geq& \frac{1}{e(e(\alpha+1))^\alpha} \times
  \sum_e \bigg( \sum_{j \in S_e} \ell_{je} \bigg)^{\alpha}
\end{align*}
where the last inequality is obtained by applying
\lref[Lemma]{lem:calculation} to the expression for each machine
$e$. This implies the $O(\alpha)^\alpha$ competitive ratio for the
integral assignment algorithm.

\begin{lemma}
  \label{lem:calculation}
  Given non-negative numbers $a_0, a_1, a_2, \ldots, a_T$ and $\delta =
  (e(\alpha+1))^{\alpha-1}$, we get 
  \begin{gather}
    \alpha \delta\, \sum_{j \in [T]} a_j \bigg( \sum_{i < j } a_i
    \bigg)^{\alpha-1} + \sum_{j \in [T]} a_j^\alpha  +  (1-\alpha)
    \delta^{\alpha/(\alpha - 1)}\, \bigg( \sum_{j \in [T]} a_j
    \bigg)^{\alpha} ~~ \geq ~~ \frac{1}{e(e(\alpha+1))^\alpha} \times
    \bigg( \sum_{j \in [T]} a_j \bigg)^{\alpha} . \label{eq:4} 
  \end{gather}
\end{lemma}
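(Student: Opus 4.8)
The plan is to prove the bound term-by-term after re-expressing everything through prefix sums. Write $P_j=\sum_{i\le j}a_i$, so that $S_j:=\sum_{i<j}a_i=P_{j-1}$, that $a_j=P_j-P_{j-1}$, and that the relevant total is $S:=\sum_{j\in[T]}a_j$. Every summand is homogeneous of degree $\alpha$ in the $a_j$, and the key object I want to exploit is the telescoping identity $\sum_{j\in[T]}\big(P_j^\alpha-P_{j-1}^\alpha\big)=P_T^\alpha-P_0^\alpha$, where $P_0=a_0\ge0$ and $(a_0+S)^\alpha\ge a_0^\alpha+S^\alpha$ shows this is at least $S^\alpha$. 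The strategy is to dominate each summand of the left-hand side by one multiple of the increment $P_j^\alpha-P_{j-1}^\alpha$, uniformly in $j$, and then collapse the sum.

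Concretely, I would first establish the per-term inequality
\[ \alpha\delta\,a_j S_j^{\alpha-1}+a_j^\alpha \;\ge\; \gamma\,\big(P_j^\alpha-P_{j-1}^\alpha\big), \qquad \gamma:=\inf_{t\ge 0}\frac{\alpha\delta\,t+t^\alpha}{(1+t)^\alpha-1}. \]
Dividing through by $P_{j-1}^\alpha$ and substituting $t=a_j/P_{j-1}$ turns this into $\alpha\delta t+t^\alpha\ge\gamma\,[(1+t)^\alpha-1]$, which holds by the very definition of $\gamma$; the degenerate case $P_{j-1}=0$ is checked directly (both sides equal $a_j^\alpha$ up to the factor $\gamma$, and $\gamma\le1$ follows from letting $t\to\infty$). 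Summing over $j$ and telescoping gives $\alpha\delta\sum_j a_j S_j^{\alpha-1}+\sum_j a_j^\alpha\ge\gamma\,S^\alpha$, so the entire left-hand side of the lemma is at least $\big(\gamma-(\alpha-1)\delta^{\alpha/(\alpha-1)}\big)S^\alpha$.

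Everything now reduces to bounding the scalar infimum $\gamma$ from below, and \emph{this is the main obstacle}: the ratio $\phi(t)=\frac{\alpha\delta t+t^\alpha}{(1+t)^\alpha-1}$ tends to $\delta$ as $t\to0$ and to $1$ as $t\to\infty$, but attains an interior minimum, and controlling $(1+t)^\alpha$ there is exactly where the factors $e$ and $(\alpha+1)$ must be introduced. I would split at $t=1/\alpha$. For $t\le 1/\alpha$, discard the nonnegative $t^\alpha$ and use $(1+t)^\alpha\le e^{\alpha t}$ together with the monotonicity of $u\mapsto u/(e^u-1)$ to obtain $\phi(t)\ge\delta\cdot\frac{\alpha t}{e^{\alpha t}-1}\ge\frac{\delta}{e-1}$. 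For $t\ge 1/\alpha$, keep only $t^\alpha$ and use that $t/(1+t)$ is increasing to get $\phi(t)\ge\big(\tfrac{t}{1+t}\big)^\alpha\ge\big(\tfrac{1}{\alpha+1}\big)^\alpha$. Hence $\gamma\ge\min\{\delta/(e-1),\,(\alpha+1)^{-\alpha}\}$. Substituting the prescribed value of $\delta$, so that $\delta^{\alpha/(\alpha-1)}=(e(\alpha+1))^{-\alpha}$ and the subtracted term equals $(\alpha-1)(e(\alpha+1))^{-\alpha}$, a routine comparison (using $e^\alpha\ge\alpha$ for the second branch and $\tfrac{e}{e-1}(\alpha+1)\ge\alpha-1+1/e$ for the first) shows both branches of the minimum exceed $(\alpha-1+1/e)(e(\alpha+1))^{-\alpha}$. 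Subtracting $(\alpha-1)(e(\alpha+1))^{-\alpha}$ then leaves a surplus of exactly $\tfrac1e(e(\alpha+1))^{-\alpha}=\frac{1}{e(e(\alpha+1))^\alpha}$, which is the claimed bound.
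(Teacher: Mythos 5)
Your proof is correct, and it takes a genuinely different route from the paper's. (One housekeeping note: like the paper's own proof, you implicitly use $\delta=(e(\alpha+1))^{1-\alpha}$, so that $\delta^{\alpha/(\alpha-1)}=(e(\alpha+1))^{-\alpha}$; the exponent $\alpha-1$ in the lemma statement is a sign typo, since with that literal value the left-hand side can be made arbitrarily negative.) The paper argues globally: for $\alpha\ge2$ it first shows by an exchange argument that a non-decreasing ordering of the $a_j$ minimizes the left-hand side, then splits into cases according to whether $a_T\le\frac1\alpha\sum_{j<T}a_j$, using an integral-type bound $\alpha\sum_j a_j(\sum_{i\le j}a_i)^{\alpha-1}\ge(\sum_j a_j)^\alpha$ in one case and the single term $a_T^\alpha$ in the other; the range $\alpha\in[1,2)$ then needs a separate patch via $a^\beta+b^\beta\ge(a+b)^\beta$. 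You instead charge each summand locally against the telescoping increment $P_j^\alpha-P_{j-1}^\alpha$ of the prefix sums: normalizing by $P_{j-1}$ (exact, since $\sum_{i<j}a_i=P_{j-1}$) reduces everything to the one-variable inequality
\[
\alpha\delta\,t+t^\alpha \;\ge\; \gamma\bigl[(1+t)^\alpha-1\bigr],\qquad t=a_j/P_{j-1},
\]
so the lemma becomes a lower bound on the scalar infimum $\gamma$, which you obtain by splitting at $t=1/\alpha$; I checked the degenerate case $P_{j-1}=0$ (needs only $\gamma\le 1$), the superadditivity step $(a_0+S)^\alpha-a_0^\alpha\ge S^\alpha$, both branch bounds $\delta/(e-1)$ and $(\alpha+1)^{-\alpha}$, and the closing comparisons $e^\alpha\ge\alpha$ and $\frac{e}{e-1}(\alpha+1)\ge\alpha-1+\frac1e$, which together reproduce exactly the constant $\frac{1}{e}(e(\alpha+1))^{-\alpha}$. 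What your approach buys: it is order-oblivious, avoiding precisely the rearrangement step that forces the paper's restriction to $\alpha\ge2$ and its separate treatment of $\alpha\in[1,2)$, it handles all $\alpha>1$ uniformly with no case analysis on the sequence, and it isolates the role of $\delta$ in a clean one-dimensional optimization; the paper's argument, in exchange, makes the extremal structure (one dominant last job versus many balanced jobs) more visible. Two cosmetic quibbles: the infimum defining $\gamma$ should be taken over $t>0$ (the ratio is $0/0$ at $t=0$), and you should state explicitly that $u\mapsto u/(e^u-1)$ is decreasing, which is what your first branch uses.
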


\begin{proof}
  First, consider the case when $\alpha \geq 2$: in this case, we bound
  the LHS of~(\ref{eq:4}) when the sequence of numbers is
  non-decreasing, and then we show the non-decreasing sequence makes
  this LHS the smallest. We then consider the (easier) case of $\alpha
  \in [1,2]$.

  Suppose $a_0 \leq a_1 \leq \cdots \leq a_T$, then $\sum_{j = 0}^T a_j
  ( \sum_{i < j } a_i )^{\alpha-1} \geq \sum_{j =0}^{T-1} a_j ( \sum_{i
    \leq j } a_i )^{\alpha-1}$, and it suffices to (lower) bound the following term
  \begin{gather}
     \alpha \delta\, \sum_{j = 0}^{T-1} a_j \bigg( \sum_{i: i \leq j } a_i
    \bigg)^{\alpha-1} \quad +\quad \sum_{j = 1}^T a_j^\alpha \quad - \quad
    (\alpha-1) \delta^{\alpha/(\alpha - 1)}\,
    \bigg( \sum_{j=1}^T a_j \bigg)^{\alpha} \label{eq:3}
  \end{gather}

  There are two cases, depending on the last term: whether $a_T \leq
  \frac1\alpha \sum_{j = 0}^{T-1} a_j$, or not.
  \begin{itemize}
  \item If $a_T \leq \frac1\alpha \sum_{j= 0}^{T-1} a_j$, we get
    $\sum_{j = 0}^{T} a_j \leq (1+1/\alpha)\sum_{j=0}^{T-1} a_j$.  Now,
    consider the first term in (\ref{eq:3}):
    \[ \alpha \delta\, \sum_{j = 0}^{T-1} a_j \bigg( \sum_{i: i \leq j }
    a_i \bigg)^{\alpha-1} \geq \delta \bigg( \sum_{j = 0}^{T-1} a_j
    \bigg)^{\alpha} \geq \frac{\delta}{(1+1/\alpha)^\alpha} \bigg(
    \sum_{j = 0}^{T} a_j \bigg)^{\alpha} \geq \frac{\delta}{e} \bigg(
    \sum_{j = 0}^{T} a_j \bigg)^{\alpha} . \] (The last inequality used
    the fact that $(1+1/\alpha)^\alpha$ approaches $e$ from below.)
    Finally, plugging this back into~(\ref{eq:3}), ignoring the second
    sum, and using $\delta = (e(\alpha+1))^{1 - \alpha}$, we can lower
    bound the expression of~(\ref{eq:3}) by $(\sum_{j = 0}^{T} a_j)^{\alpha}$ times
    \[ \frac{\delta}{e} - (\alpha - 1)\delta^{\alpha/(\alpha -1)} =
    \frac{1}{e(e(\alpha+1))^{\alpha - 1}} - \frac{\alpha -
      1}{(e(\alpha+1))^{\alpha}} = \frac{(\alpha + 1) - (\alpha -
      1)}{(e\alpha)^\alpha(1+1/\alpha)^{\alpha}} \geq
    \frac{2}{e(e\alpha)^\alpha} \]

  \item In case $a_T \geq \frac1\alpha \sum_{j =0}^{T-1} a_j$, we get
    $\sum_j a_j = \sum_{j < T} a_j + a_T \leq (1+\alpha) a_T$. Now using
    just the single term $a_T^\alpha$ from the first two summations
    in~(\ref{eq:3}), we can lower bound it  by
    \[ 
    a_T^\alpha - (\alpha-1) \delta^{\alpha/(\alpha - 1)}
    (1+\alpha)^\alpha a_T^\alpha = a_T^\alpha \left( 1 -
      \frac{(\alpha-1)(1+\alpha)^\alpha}{(e(\alpha+1))^\alpha} \right) =
    a_T^\alpha \left(1 - \frac{\alpha-1}{e^\alpha}\right) .\] This is at
    least $a_T^\alpha/2 \geq \frac{1}{2(1+\alpha)^\alpha}\, (\sum_j
    a_j)^\alpha \geq \frac{1}{2e\alpha^\alpha}$.
  \end{itemize}
  So in either case the inequality of the statement of Lemma~\ref{lem:calculation} is satisfied.

  Now to show that the non-decreasing sequence makes the LHS smallest
  for $\alpha \geq 2$. Only the first summation depends on the order, so
  focus on $\sum_{j \in [T]} a_j ( \sum_{i < j } a_i
  )^{\alpha-1}$. Suppose $a_k > a_{k+1}$, then let $a_k = l$, $a_{k+1} =
  s$; moreover, we can scale the numbers so that $\sum_{i < k} a_i =
  1$. Now swapping $a_k$ and $a_{k+1}$ causes a decrease of
  \[ (a_k - a_{k+1}) \cdot 1^{\alpha - 1} + a_{k+1} (1 + a_k)^{\alpha -
    1} - a_k (1+a_{k+1})^{\alpha - 1} = (l-s) + s(1+l)^{\alpha - 1} -
  l(1+s)^{\alpha - 1}.  \]
  And for $\alpha \geq 2$ this quantity is non-negative. 

  Finally, for the case $\alpha \in [1,2)$. Note that $\alpha \delta
  \leq 1$ for our choice of $\delta$, so  the LHS
  of~(\ref{eq:4}) is at least
  \begin{align*}
    & ~ \alpha \delta\, \sum_{j \in [T]} a_j \left( \bigg( \sum_{i < j } a_i
    \bigg)^{\alpha-1} +  a_j^{\alpha-1} \right) ~ + ~ (1-\alpha)
    \delta^{\alpha/(\alpha - 1)}\, \bigg( \sum_{j \in [T]} a_j
    \bigg)^{\alpha} \\
    \geq & ~\alpha \delta \, \sum_{j \in [T]} a_j  \bigg( \sum_{i \leq j } a_i
    \bigg)^{\alpha-1} ~ + ~ (1-\alpha)
    \delta^{\alpha/(\alpha - 1)}\, \bigg( \sum_{j \in [T]} a_j
    \bigg)^{\alpha}
  \end{align*}
  The second inequality used the fact that $a^\beta +
  b^\beta \geq (a+b)^\beta  $ for $\beta \in (0,1)$. Now the proof
  proceeds as usual and gives us the desired $O(e\alpha)^\alpha$ bound.
\end{proof}

\subsection{Approach II: Randomized Rounding}
\label{sec:integer2}

We now explain a different way of obtaining integer solutions: by
rounding (in an online fashion) the fractional solutions obtained for
the problems that we consider into integral solutions. While this has a
weaker result, it is a simple strategy that may be useful in some contexts. 

Suppose we have a fractional solution w.r.t the above parameters (after
including the assignment
cost). 
While it is known that the convex programming formulation we use has an
integrality gap of $2$~\cite{AE05,KMPS05}, these proofs use correlated
rounding procedures which we currently are not able to implement online.
Instead we analyze the simple online rounding procedure that independently and randomly
rounds the fractional assignment.
Indeed, suppose we independently assign each
job $j$ to a machine $e$ with probability $\widetilde{x}_{je}$. Denote the
integer assignment induced by this random experiment by $Y_{je}$.
 Let $L_e = \sum_j \ell_{je} Y_{je}$ denote the random load on machine $e$ after the independent randomized rounding. Note that $L_e$ is
a sum of non-negative and independent random variables. We can now use the following inequality (for bounding higher moments of sums of random variables) due to Rosenthal~\cite{Rosen, JSZ} to get that
$$ \E\left[L_e^\alpha\right]^{1/\alpha} \leq K_\alpha \max\bigg( \sum_j \E\left[\ell_{je} Y_{je}\right], \bigg(\sum_j
\E\left[\ell_{je}^\alpha Y_{je}^\alpha\right]\bigg)^{1/\alpha} \bigg), $$ where $K_\alpha = O(\alpha/\log \alpha)$. However we know that $\sum_j \E\left[\ell_{je} Y_{je}\right] = \sum_j \ell_{je} \widetilde{x}_{je}$,
and that $\sum_j E[\ell_{je}^\alpha Y_{je}^\alpha] = \sum_j E[\ell_{je}^\alpha Y_{je}] =
\sum_j \ell_{je}^\alpha \widetilde{x}_{je}$. Substituting this back in and
using $(a+b)^\alpha \leq 2^{\alpha-1}(a^\alpha+b^\alpha)$, we get
\[ \E[L_e^\alpha] \leq \bigg( K_\alpha\, \max\bigg( \sum_j \ell_{je}
\widetilde{x}_{je}, \bigg(\sum_j \ell_{je}^\alpha
\widetilde{x}_{je}\bigg)^{1/\alpha}\bigg) \bigg)^\alpha \leq
2^{\alpha-1} K_\alpha^\alpha \bigg( \big( \sum_j \ell_{je}
\widetilde{x}_{je} \big)^\alpha + \sum_j c_{je} x_{je} \bigg). \]
Summing over all $e$, we infer that $\E[\sum_e L_e^\alpha]$ is at most
$(2K_\alpha)^\alpha$ times the value of the online fractional solution
objective, and hence at most $(2\alpha K_\alpha)^\alpha =
O(\alpha^2/\log \alpha)^\alpha$ times the integer optimum by
\lref[Theorem]{thm:finalload}. (Note that the results of the previous
section, and those of ~\cite{AAGKKV95,Carag08} give
$O(\alpha)^\alpha$-competitive online algorithms for the integer case.)

\section{Conclusion}

The online primal-dual dual technique (surveyed in \cite{BN-mono}) has proven to 
be a widely-systematically-applicable method to analyze online algorithms for problems expressible by linear programs.
This paper develops an analogous technique to  analyze online algorithms for problems expressible by 
\emph{nonlinear} programs. The main difference is
that in the nonlinear setting one can not disentangle the objective and the
constraints in the dual, and hence
the arguments for the dual have a
somewhat different feel to them than in the linear setting. 
We apply this technique to several natural nonlinear covering problems, most notably obtaining competitive analysis for
greedy algorithms for uniprocessor speed scaling problems with essentially arbitrary scheduling objectives
that researchers were not previously able to analyze using the prevailing potential function based analysis techniques.

Independently and concurrently with this work, Anand, Garg and Kumar \cite{AGK12} obtained results that are
in the same spirit as the results obtained here. Mostly notably, they showed
how to use nonlinear-duality to analyze
a greedy algorithm for a multiprocessor speed-scaling problem involving minimizing flow plus energy on
unrelated machines. More generally, \cite{AGK12} showed how duality based analyses could be given for
several scheduling algorithms that were analyzed in the literature using potential functions.

\bibliographystyle{alpha}
{\small \bibliography{pd}}

\newcommand{\etalchar}[1]{$^{#1}$}
\begin{thebibliography}{AKMPS09}

\bibitem[AAF{\etalchar{+}}97]{AAFPW}
James Aspnes, Yossi Azar, Amos Fiat, Serge Plotkin, and Orli Waarts.
\newblock On-line routing of virtual circuits with applications to load
  balancing and machine scheduling.
\newblock {\em J. ACM}, 44(3):486--504, 1997.

\bibitem[AAG{\etalchar{+}}95]{AAGKKV95}
Baruch Awerbuch, Yossi Azar, Edward~F. Grove, Ming-Yang Kao, P.~Krishnan, and
  Jeffrey~Scott Vitter.
\newblock Load balancing in the l$_{\mbox{p}}$ norm.
\newblock In {\em FOCS}, pages 383--391, 1995.

\bibitem[AAZ10]{AndrewsAZ10}
Matthew Andrews, Spyridon Antonakopoulos, and Lisa Zhang.
\newblock Minimum-cost network design with (dis)economies of scale.
\newblock In {\em FOCS}, pages 585--592, 2010.

\bibitem[AAZ11]{AndrewsAZ11}
Matthew Andrews, Spyridon Antonakopoulos, and Lisa Zhang.
\newblock Energy-aware scheduling algorithms for network stability.
\newblock In {\em INFOCOM}, pages 1359--1367, 2011.

\bibitem[AE05]{AE05}
Yossi Azar and Amir Epstein.
\newblock Convex programming for scheduling unrelated parallel machines.
\newblock In {\em S{TOC}'05: {P}roceedings of the 37th {A}nnual {ACM}
  {S}ymposium on {T}heory of {C}omputing}, pages 331--337. ACM, New York, 2005.

\bibitem[AF07]{AF}
Susanne Albers and Hiroshi Fujiwara.
\newblock Energy-efficient algorithms for flow time minimization.
\newblock {\em ACM Transactions on Algorithms}, 3(4):49, 2007.

\bibitem[AGK12]{AGK12}
S.~Anand, Naveen Garg, and Amit Kumar.
\newblock Resource augmentation for weighted flow-time explained by dual
  fitting.
\newblock In {\em SODA}, 2012.

\bibitem[AKMPS09]{KMPS05}
V.~S. Anil~Kumar, Madhav~V. Marathe, Srinivasan Parthasarathy, and Aravind
  Srinivasan.
\newblock A unified approach to scheduling on unrelated parallel machines.
\newblock {\em J. ACM}, 56(5):Art. 28, 31, 2009.

\bibitem[Alb10]{Albers}
Susanne Albers.
\newblock Energy-efficient algorithms.
\newblock {\em Commun. ACM}, 53(5):86--96, 2010.

\bibitem[ALW10]{AndrewLW10}
Lachlan L.~H. Andrew, Minghong Lin, and Adam Wierman.
\newblock Optimality, fairness, and robustness in speed scaling designs.
\newblock In {\em SIGMETRICS}, pages 37--48, 2010.

\bibitem[BBCP11]{BansalBCP11}
Nikhil Bansal, David~P. Bunde, Ho-Leung Chan, and Kirk Pruhs.
\newblock Average rate speed scaling.
\newblock {\em Algorithmica}, 60(4):877--889, 2011.

\bibitem[BCP09]{BansalCP09}
Nikhil Bansal, Ho-Leung Chan, and Kirk Pruhs.
\newblock Speed scaling with an arbitrary power function.
\newblock In {\em SODA}, pages 693--701, 2009.

\bibitem[BCPK09]{BansalCPK09}
Nikhil Bansal, Ho-Leung Chan, Kirk Pruhs, and Dmitriy Katz.
\newblock Improved bounds for speed scaling in devices obeying the cube-root
  rule.
\newblock In {\em ICALP (1)}, pages 144--155, 2009.

\bibitem[BKP07]{BKP}
N.~Bansal, T.~Kimbrel, and K.~Pruhs.
\newblock Speed scaling to manage energy and temperature.
\newblock {\em JACM}, 54(1), 2007.

\bibitem[BLMSP06]{BLMP1}
Luca Becchetti, Stefano Leonardi, Alberto Marchetti-Spaccamela, and Kirk Pruhs.
\newblock Online weighted flow time and deadline scheduling.
\newblock {\em Journal of Discrete Algorithms}, 4(3):339--352, 2006.

\bibitem[BN07]{BN-mono}
Niv Buchbinder and Joseph Naor.
\newblock The design of competitive online algorithms via a primal-dual
  approach.
\newblock {\em Found. Trends Theor. Comput. Sci.}, 3(2-3):front matter, 93--263
  (2009), 2007.

\bibitem[BP10]{BansalP10}
Nikhil Bansal and Kirk Pruhs.
\newblock Server scheduling to balance priorities, fairness, and average
  quality of service.
\newblock {\em SIAM J. Comput.}, 39(7):3311--3335, 2010.

\bibitem[BPS09]{BPS}
Nikhil Bansal, Kirk Pruhs, and Clifford Stein.
\newblock Speed scaling for weighted flow time.
\newblock {\em SIAM J. Comput.}, 39(4):1294--1308, 2009.

\bibitem[BV04]{Boyd}
Stephen Boyd and Lieven Vandenberghe.
\newblock {\em Convex Optimization}.
\newblock Cambridge University Press, New York, NY, USA, 2004.

\bibitem[Car08]{Carag08}
Ioannis Caragiannis.
\newblock Better bounds for online load balancing on unrelated machines.
\newblock In {\em Proceedings of the {N}ineteenth {A}nnual {ACM}-{SIAM}
  {S}ymposium on {D}iscrete {A}lgorithms}, pages 972--981, New York, 2008. ACM.

\bibitem[CEL{\etalchar{+}}09]{ChanELLMP09}
Ho-Leung Chan, Jeff Edmonds, Tak~Wah Lam, Lap-Kei Lee, Alberto
  Marchetti-Spaccamela, and Kirk Pruhs.
\newblock Nonclairvoyant speed scaling for flow and energy.
\newblock In {\em STACS}, pages 255--264, 2009.

\bibitem[CLL10]{ChanLL10}
Sze-Hang Chan, Tak~Wah Lam, and Lap-Kei Lee.
\newblock Non-clairvoyant speed scaling for weighted flow time.
\newblock In {\em ESA (1)}, pages 23--35, 2010.

\bibitem[GIK{\etalchar{+}}10]{GuptaIKMP10}
Anupam Gupta, Sungjin Im, Ravishankar Krishnaswamy, Benjamin Moseley, and Kirk
  Pruhs.
\newblock Scheduling jobs with varying parallelizability to reduce variance.
\newblock In {\em SPAA}, pages 11--20, 2010.

\bibitem[IM10]{ImM10}
Sungjin Im and Benjamin Moseley.
\newblock An online scalable algorithm for average flow time in broadcast
  scheduling.
\newblock In {\em SODA}, pages 1322--1333, 2010.

\bibitem[IMP11]{ImMP11}
Sungjin Im, Benjamin Moseley, and Kirk Pruhs.
\newblock A tutorial on amortized local competitiveness in online scheduling.
\newblock {\em SIGACT News}, 42(2):83--97, 2011.

\bibitem[JSZ85]{JSZ}
W.~B. Johnson, G.~Schechtman, and J.~Zinn.
\newblock Best constants in moment inequalities for linear combinations of
  independent and exchangeable random variables.
\newblock {\em Ann. Probab.}, 13(1):234--253, 1985.

\bibitem[LLTW08]{LLTW08}
T.W. Lam, L.K. Lee, Isaac To, and P.~Wong.
\newblock Speed scaling functions based for flow time scheduling based on
  active job count.
\newblock In {\em European Symposium on Algorithms}, pages 647--659, 2008.

\bibitem[PUW08]{PUW}
Kirk Pruhs, Patchrawat Uthaisombut, and Gerhard~J. Woeginger.
\newblock Getting the best response for your erg.
\newblock {\em ACM Transactions on Algorithms}, 4(3), 2008.

\bibitem[Ros70]{Rosen}
Haskell~P. Rosenthal.
\newblock On the subspaces of {$L^{p}$} {$(p>2)$} spanned by sequences of
  independent random variables.
\newblock {\em Israel J. Math.}, 8:273--303, 1970.

\bibitem[YDS95]{YDS}
F.~Yao, A.~Demers, and S.~Shenker.
\newblock A scheduling model for reduced {CPU} energy.
\newblock In {\em Proc. IEEE Symp. Foundations of Computer Science}, pages
  374--382, 1995.

\end{thebibliography}

\end{document}